\theoremstyle{plain} 
\newtheorem{theorem}{Theorem}[section]
\newtheorem{lemma}[theorem]{Lemma}
\theoremstyle{definition}  
\newtheorem{definition}[theorem]{Definition}
\newtheorem{remark}[theorem]{Remark}
\let\originalleft\left
\let\originalright\right
\renewcommand{\left}{\mathopen{}\mathclose\bgroup\originalleft}
\renewcommand{\right}{\aftergroup\egroup\originalright}
\newcommand{\lp}{\left(}
\newcommand{\rp}{\right)}
\newcommand{\lc}{\left\{}
\newcommand{\rc}{\right\}}
\newcommand{\der}{\partial}
\newcommand{\cua}{^{2}}
\newcommand{\R}{\mathds{R}}      
\newcommand{\F}{\mathds{F}}
\newcommand{\I}{\mathds{I}}
\newcommand{\Flder}{\rightarrow}
\newcommand{\lcf}{\lbrack\! \lbrack}
\newcommand{\rcf}{\rbrack\! \rbrack}
\newcommand{\proa}{A^*G \mbox{$\;$}_{\tau^*} \kern-3pt\times_\alpha
G \mbox{$\;$}_\beta \kern-3pt\times_{\tau^*} A^*G}
\newcommand{\lvec}[1]{\overleftarrow{#1}}
\newcommand{\rvec}[1]{\overrightarrow{#1}}
\newcommand{\alg}{\mathfrak{so}(3)}
\newcommand{\dalg}{\mathfrak{so}^{*}(3)}
\newcommand{\gru}{SO(3)}
\newcommand{\e}{\operatorname{exp}}
\newcommand{\Ad}{\operatorname{Ad}}
\newcommand{\ca}{\operatorname{cay}}
\newcommand{\ad}{\operatorname{ad}}
\newcommand{\spanop}{\operatorname{span}}  
\newcommand{\al}{\mathfrak{g}}
\newcommand{\dal}{\mathfrak{g}^{*}}
\newcommand{\met}{\mathcal{G}}
\newcommand{\dist}{\mathcal{D}}
\newcommand{\qu}{\mathcal{Q}}
\newcommand{\pe}{\mathcal{P}}
\begin{document}

\title{New developments on the Geometric Nonholonomic Integrator}

\author[S. Ferraro]{Sebasti\'an Ferraro}
\address{S. Ferraro: Universidad Nacional del Sur, Instituto de Matem\'atica Bah\'{\i}a Blanca and CONICET, Argentina} \email{sferraro@uns.edu.ar}

\author[F. Jim\'enez]{Fernando Jim\'enez}
\address{F. Jim\'enez: Zentrum Mathematik der Technische Universit\"at M\"unchen, D-85747 Garching bei M\"unchen, Germany} \email{fjimenez@ma.tum.de}

\author[D.\ Mart\'{\i}n de Diego]{David Mart\'{\i}n de Diego}
\address{D.\ Mart\'{\i}n de Diego: Instituto de Ciencias Matem\'aticas, CSIC-UAM-UC3M-UCM,
Campus de Cantoblanco, UAM,
C/Nicol\'as Cabrera, 15
 28049 Madrid, Spain} \email{david.martin@icmat.es}

\thanks{This work has been partially supported by MEC (Spain)
Grants   MTM 2010-21186-C02-01, MTM2009-08166-E,  the ICMAT Severo Ochoa project SEV- 2011-0087 and IRSES-project ``Geomech-246981''. S. Ferraro also wishes to thank CONICET Argentina (PIP 2010--2012 GI 11220090101018), ANPCyT Argentina (PICT 2010-2746) and SGCyT UNS for financial support. The research of F. Jim\'enez has been
supported in its final stage by the DFG Collaborative Research Center TRR 109, ``Discretization in Geometry
and Dynamics''.  The authors would like to thank the reviewers for their helpful remarks.}

\keywords{
Geometric nonholonomic integrator, nonholonomic mechanics, discrete variational calculus, reduction by symmetries, affine constraints}

\subjclass[2000]{
70F25; 37J60; 37M15; 37N05; 65P10; 70-08}

\begin{abstract}
In this paper, we will discuss new developments regarding the Geometric Nonholonomic Integrator (GNI) \cite{SDD1,SDD2}.
GNI is a  discretization scheme adapted to
  nonholonomic mechanical systems through a discrete geometric
  approach. This method was designed to account for some of the special geometric structures associated to a  nonholonomic motion, like preservation of energy, preservation of constraints or the nonholonomic momentum equation.
  First, we study the GNI versions of the symplectic-Euler methods, paying special attention to their convergence behavior. Then, we construct an extension of the GNI in the case of affine constraints. Finally, we generalize the proposed method to nonholonomic reduced systems, an important subclass of examples in nonholonomic dynamics. We illustrate the behavior of the proposed method with the example of the Chaplygin sphere, which accounts for the last two features, namely it is both a reduced and an affine system.
\end{abstract}

\maketitle

\section{Introduction}

Nonholonomic constraints have been a subject of deep analysis  since
the dawn of Analytical Mechanics.
The origin of its study is nicely explained in the introduction of the
book by Neimark and Fufaev \cite{Ball},

\begin{quote}
``The birth of the theory of dynamics of nonholonomic systems occurred
at the time when the universal and brilliant analytical formalism
created by Euler and Lagrange was found, to general amazement,
to be inapplicable to the very simple mechanical problems of
rigid bodies rolling without slipping on a plane. Lindel{\"{o}}f's error, detected
by Chaplygin, became famous and rolling systems attracted
the attention of many eminent scientists of the time...''
\end{quote}
Many authors have recently shown a new interest in that theory and
also in its relationship to the new developments in control theory
and robotics. The main characteristic of this last period  is that nonholonomic systems are studied from a geometric perspective (see L.D. Fadeev and A.M. Vershik
\cite{VF} as an advanced and fundamental reference, and also,
\cite{Bl2003,BlKrMaMu1996,Bolsinov,CCMD,Cort,Koiller,Kozlov2,LD} and references
therein). From this perspective, nonholonomic mechanics forms part of a wider body of research called {\em Geometric Mechanics}.

A nonholonomic system is a mechanical system subjected to constraint  functions which are,
roughly speaking, functions on the velocities that are not derivable
from position constraints. They arise, for instance, in mechanical systems that have rolling or certain kinds of sliding contact.
Traditionally, the equations of motion for nonholonomic mechanics
are derived from the  Lagrange-d'Alembert principle, which restricts
the set of infinitesimal variations (or constrained forces) in terms
of the constraint functions.
In such  systems, some differences between unconstrained classical Hamiltonian and Lagrangian sytems and nonholonomic dynamics  appear. For instance,  nonholonomic systems are non-variational in the classical sense, since they arise from the Lagrange-d'Alembert principle and not from Hamilton's principle. Moreover, when the nonholonomic constrains are linear in velocities and a symmetry arises,  energy is preserved but in general momentum is not. Nonholonomic systems are described by an almost-Poisson structure (i.e., there is a bracket that together with the energy on the phase space defines the motion, but the bracket generally does not satisfy the Jacobi identiy); and finally, unlike the Hamiltonian setting, volume may not be preserved in the phase space, leading to interesting asymptotic stability in some cases, despite energy conservation. 

From the applied point of view, in the last decade great attention has been put upon the study of the dynamical behavior  of some particular examples of nonholonomic systems; more concretely, rolling without slipping and spinning of different rigid bodies on a plane or on a sphere. Besides, a hierarchy has been constructed in terms of the  body's surface geometry and mass distribution. The existence of an invariant measure and Hamiltonization of such systems, and the necessary conditions for this existence have been carefully studied in \cite{BM1,BMB,BMK,KoEh}.
See \cite{Bl2003,BlKrMaMu1996,CCMD,Cort,Koiller,LD,VF} for more details about nonholonomic systems.

Recent works, firstly iniciated by J. Cort\'es and S. Mart{\'\i}nez in their seminal paper \cite{Cortes_Martinez:Non-holonomic_integrators}, where the authors introduce the notion of discrete Lagrange-d'Alembert's principle, have been devoted to derive numerical methods for nonholonomic systems (see \cite{Fedorov_Zenkov:Discrete_nonholonomic_LL_systems_on_Lie_groups,IMMM,McLPerl,KMS}). These  numerical integrators for nonholonomic
systems have  very good energy behavior in simulations  and additional properties such as the
preservation of the discrete nonholonomic momentum map. In a different direction,
some of the authors of this paper have introduced
the Geometric Nonholonomic Integrator (GNI), whose properties and original motivations can be found in  \cite{SDD1}, while some of its applications and numerical performance can be found in \cite{SDD2,KFMD}. Particularly, in \cite{KFMD} we have 
examined numerically the geometric nonholonomic integrator (GNI) and the reduced d'Alembert-Pontryagin integrator (RDP) in some typical examples of nonholonomic mechanics: the Chaplygin sleigh and the snakeboard. In a different approach, numerical schemes based on the Hamiltonization of nonholonomic systems have been explored in \cite{BFO,MeBlFe}. Although these methods have shown an excellent qualitative and quantitive behavior, they are quite difficult to implement  with generality since they involve solving a difficult task: the Hamiltonization or an inverse problem for a nonholonomic system \cite{BFM}.

Our aim in this work is to analyze further developments of the GNI method introduced in the mentioned references. Particularly, we focus on two aspects: the GNI extension of the usual symplectic-Euler methods (we prove their consistency order and the fact that they are the adjoint of one another, and the generalization of the method to new situations, namely the cases of affine constraints (definition \ref{GNIAff}), reduction by a Lie group of symmetries (definition \ref{GNIRed}) and Lie algebroids (definition \ref{GNIAlg}). All the new generalizations are appropriately illustrated with theoretical and numerical results. 

The paper is structured as follows: $\S$\ref{Riemm} is devoted to introduce the continuous nonholonomic problem with linear constraints, to obtain the nonholonomic equations by means of the Lagrange-d'Alembert principle and to show how these equations can be reobtained through a projection procedure when the system is endowed with a Riemannian metric. $\S$\ref{section2} summarizes the general theory of variational integrators, while $\S$\ref{section4} presents the proposed GNI integrator. In \S\ref{SymEuExt}, the GNI versions of the symplectic-Euler methods are obtained and their convergence behavior studied in theorem \ref{Teo1}. It is also proved in theorem \ref{adThe} that both methods are adjoint of each other; this fact establishes an interesting parallelism with the free (meaning {\it unconstrained}) variational integrators. $\S$\ref{AffS} accounts for the affine extension of the GNI integrator which is illustrated with the theoretical result of SHAKE and RATTLE methods. Section $\S$\ref{RS} is devoted to the development of the GNI integrator for reduced systems, in the case of both linear and affine constraints. The former case is illustrated with the theoretical result of RATTLE algorithm while the latter (which is also affine) is carefully treated in the example of the Chaplygin sphere with three different moments of inertia, including some numerical results. Finally, in $\S$\ref{LieAlg} we extend the GNI integrator to Lie algebroids.

\section{Continuous nonholonomic mechanics}
\label{Riemm}

Mathematically, the nonholonomic setting can be described as follows. We shall start with a configuration space $Q$, which is an $n$-dimensional differentiable manifold with local coordinates denoted by $q^{i},\hspace{1mm} i=1,...,n=\dim Q$, and a non-integrable distribution $\mathcal{D}$ on $Q$ that describes the linear nonholonomic constraints. We can consider this constant-rank distribution $\mathcal{D}$ as a vector subbundle of the tangent bundle $TQ$ (velocity phase space) of the configuration space. Moreover, and as we mentioned in the introduction, $\mathcal{D}$ defines a set of constraints on the velocities. Locally, the linear constraints are written as follows:
\begin{equation}\label{LC}
\phi^{a}\lp q,\dot q\rp=\mu^{a}_{i}\lp q\rp\dot q^{i}=0,\hspace{2mm} 1\leq a\leq m,
\end{equation}
where $\mbox{rank}\lp\mathcal{D}\rp=n-m$. The annihilator $\mathcal{D}^{\circ}$ is locally given by
\[
\mathcal{D}^{\circ}=\mbox{span}\lc\mu^{a}=\mu_{i}^{a}(q)\,dq^{i};\hspace{1mm} 1\leq a\leq m\rc,
\]
where the 1-forms $\mu^{a}$ are independent.

In addition to the distribution, we need to specify the dynamical evolution of the system, usually by fixing a Lagrangian function $L:TQ\Flder\R$. In nonholonomic mechanics, the procedure permitting the extension from the Newtonian point of view to the Lagrangian one is given by the Lagrange-d'Alembert principle. This principle states that a curve $q:I\subset \R\Flder Q$  is an admissible motion of the system if
\[
\delta\mathcal{J}=\delta\int^{T}_{0}L\lp q\lp t\rp, \dot q\lp t\rp\rp dt=0
\]
for all variations such that $\delta q\lp t\rp\in\mathcal{D}_{q\lp t\rp}$, $0\leq t\leq T$, and if the velocity of the curve itself satisfies the constraints. It is remarkable that the Lagrange-d'Alembert principle is not variational since we are imposing the constraints on the curve ``after extremizing'' the functional $\mathcal{J}$. From Lagrange-d'Alembert's principle, we arrive to the nonholonomic equations
\begin{subequations}\label{LdAeqs}
\begin{align}
\frac{d}{dt}\lp\frac{\der L}{\der\dot q^{i}}\rp-\frac{\der L}{\der q^{i}}&=\lambda_{a}\mu^{a}_{i},\label{Con-1}\\\label{Con-2}
\mu_{i}^{a}(q)\,\dot q^{i}&=0,
\end{align}
\end{subequations}
where $\lambda_{a},\hspace{1mm} a=1,...,m$ is a set of Lagrange multipliers. The right-hand side of equation (\ref{Con-1}) represents the force induced by the constraints, and equations (\ref{Con-2}) represent the constraints themselves.

Now we are going to restrict ourselves to the case of nonholonomic mechanical systems with mechanical Lagrangian, i.e,
\begin{equation}\label{Mec}
L\lp v_{q}\rp=\frac{1}{2}{\mathcal G}\lp v_{q},v_{q}\rp-V\lp q\rp,\hspace{3mm} v_{q}\in T_{q}Q;
\end{equation}
where ${\mathcal G}$ is a Riemannian metric on the configuration space $Q$ locally determined by the matrix $M=\lp {\mathcal G}_{ij}\rp_{1\leq i,j\leq n}$, where ${\mathcal G}_{ij}={\mathcal G}\lp\der/\der q^{i},\der/\der q^{j}\rp$. Using some basic tools of Riemannian geometry (see, for instance, \cite{bullolewis}), we may write the equations of motion of the unconstrained system determined by $L$ as
\begin{equation}\label{EqsRiemanLibre}
\nabla_{\dot c\lp t\rp}\dot c\lp t\rp=-\mbox{grad}\hspace{2mm}V\lp c\lp t\rp\rp,
\end{equation}
where $\nabla$ is the Levi-Civita connection associated with ${\mathcal G}$. Observe that if $V\equiv0$ then the Euler-Lagrange equations become the geodesic equations for the Levi-Civita connection.

When the system is subjected to nonholonomic constraints, the equations turn out to be
\begin{equation}\nonumber
\nabla_{\dot c\lp t\rp}\dot c\lp t\rp=-\mbox{grad}\hspace{2mm}V\lp c\lp t\rp\rp+\lambda\lp c\lp t\rp\rp, \hspace{2mm} \dot c\lp t\rp\in\mathcal{D}_{c\lp t\rp},
\end{equation}
where $\lambda$ is a section of $\mathcal{D}^{\perp}$ along $c$ (see \cite{Bl2003,bullolewis,CCMD}). Here, $\mathcal{D}^{\perp}$ stands for the orthogonal complement of $\mathcal{D}$ with respect to ${\mathcal G}$.

Since $Q$ is equipped with a Riemannian metric,  we can decompose the tangent bundle as $TQ=\mathcal{D}\oplus\mathcal{D}^{\perp}$. Moreover, we can also construct two complementary projectors
$\mathcal{P}:TQ\Flder\mathcal{D},
\ \mathcal{Q}:TQ\Flder\mathcal{D}^{\perp}.
$.
In order to obtain a local expression for $\mathcal{P}$ and $\mathcal{Q}$,  define the vector fields $Z^{a}\hspace{2mm},\,1\leq a\leq m$, on $Q$ by
\[
{\mathcal G}\lp Z^{a},Y\rp=\mu^{a}\lp Y\rp,\hspace{2mm} \mbox{for all}\hspace{2mm} Y\in {\mathfrak X}(M);
\]
that is, $Z^a$ is the gradient vector field of the 1-form $\mu^a$. Thus, $\mathcal{D}^{\perp}$ is spanned by $Z^a$, $1\leq a\leq m$. In local coordinates:
\[
Z^{a}={\mathcal G}^{ij}\mu^{a}_{i}\frac{\der}{\der q^{j}}.
\]
Considering the $m\times m$ matrix $(C^{ab})=(\mu_i^a\,{\mathcal G}^{ij}\,\mu_j^b)$ (which is symmetric and regular since ${\mathcal G}$ is a Riemannian metric), we obtain the local description of $\mathcal{Q}$:
\[
\mathcal{Q}=C_{ab}Z^{a}\otimes\mu^{b}=C_{ab}{\mathcal G}^{ij}\mu_{i}^{a}\mu_{k}^{b}\frac{\der}{\der q^{j}}\otimes dq^{k},
\]
and $\mathcal{P}=\operatorname{Id}_{TQ}-\mathcal{Q}$. 
Finally, by using these projectors we may rewrite the equation of motion as follows. A curve $c\lp t\rp$ is a motion of the nonholonomic system if it satisfies the constraints, i.e., $\dot c\lp t\rp\in\mathcal{D}_{c\lp t\rp}$, and, in addition, the ``projected equation of motion''
\begin{equation}\label{NHProjectedEq}
\mathcal{P}\lp\nabla_{\dot c\lp t\rp}\dot c\lp t\rp\rp=-\mathcal{P}\lp\mbox{grad}\,V\lp c\lp t\rp\rp\rp
\end{equation}
is fulfilled.

Summarizing, we have obtained the dynamics of the nonholonomic system (\ref{NHProjectedEq}) applying the projector $\mathcal{P}$ to the unconstrained equations of motion (\ref{EqsRiemanLibre}).

\section{Discrete Mechanics and Variational Integrators}
\label{section2}

 Variational integrators are a kind of geometric integrators for the Euler-Lagrange equations which retain  their variational character and also, as a consequence,  some of main  geometric properties of the continuous
system, such as symplecticity and momentum conservation (see
\cite{Hairer,MarsdenWest,MoVe,Veselov}).
  In the following we will summarize the main features of this type
of geometric
 integrators.  A \textbf{discrete Lagrangian} is a map
$L_d:Q\times Q\rightarrow\R$, which may be considered as
an approximation of the action integral defined by a continuous  Lagrangian $L\colon TQ\to
\R$, that is,
$
L_d(q_0, q_1)\approx \int^h_0 L(q(t), \dot{q}(t))\; dt,
$
where $q(t)$ is a solution of the Euler-Lagrange equations for $L$ joining $q(0)=q_0$ and $q(h)=q_1$ for small enough $h>0$.

 Define the \textbf{action sum} $S_d\colon Q^{N+1}\to
\R$  corresponding to the Lagrangian $L_d$ by
$
{S_d}=\sum_{k=1}^{N}  L_d(q_{k-1}, q_{k}),
$
where $q_k\in Q$ for $0\leq k\leq N$, where $N$ is the number of steps. The discrete variational
principle   states that the solutions of the discrete system
determined by $L_d$ must extremize the action sum given fixed
endpoints $q_0$ and $q_N$. By extremizing ${S_d}$ over $q_k$,
$1\leq k\leq N-1$, we obtain the system of difference equations
\begin{equation}\label{discreteeq}
 D_1L_d( q_k, q_{k+1})+D_2L_d( q_{k-1}, q_{k})=0.
\end{equation}

These  equations are usually called the  \textbf{discrete
Euler--Lagrange equations}. Under some regularity hypotheses (the
matrix $(D_{12}L_d(q_k, q_{k+1}))$ is regular), it is possible to
define  from (\ref{discreteeq}) a (local) discrete flow $ \Upsilon_{L_d}\colon Q\times
Q\to  Q\times Q$, by $\Upsilon_{L_d}(q_{k-1}, q_k)=(q_k,
q_{k+1})$. Define the  discrete
Legendre transformations associated to  $L_d$ as
\begin{eqnarray*}
\F L_d^-\colon Q\times Q&\to & T^*Q\\
(q_k, q_{k+1})&\longmapsto& (q_k, -D_1 L_d(q_k, q_{k+1})),\\
\F L_d^+\colon  Q\times Q&\to&  T^*Q\\
(q_k, q_{k+1})&\longmapsto& (q_{k+1}, D_2 L_d(q_k, q_{k+1})),
\end{eqnarray*}
and the discrete Poincar{\'e}--Cartan 2-form $\omega_d=(\F L_d^+)^*\omega_Q=(\F L_d^-)^*\omega_Q$,
where $\omega_Q$ is the canonical symplectic form on $T^*Q$. The
discrete algorithm determined by $\Upsilon_{L_d}$ preserves the
symplectic form $\omega_d$, i.e., $\Upsilon_{L_d}^*\omega_d=\omega_d$.
Moreover, if the discrete Lagrangian is invariant under the
diagonal action of a Lie group $G$, then the discrete momentum map
$J_d\colon Q\times Q \to  {\mathfrak g}^*$ defined by
\[ \langle
J_d(q_k, q_{k+1}), \xi\rangle=\langle D_2L_d(q_k, q_{k+1}),
\xi_Q(q_{k+1})\rangle \]
is preserved by the discrete flow.
Therefore, these integrators are symplectic-momentum preserving. Here, $\xi_Q$ denotes the fundamental vector field
determined by $\xi\in {\mathfrak g}$, where ${\mathfrak g}$ is the Lie
algebra of $G$. (See \cite{MarsdenWest} for more details.)

\section{The Geometric Nonholonomic Integrator}\label{section4}

The Geometric Nonholonomic Integrator (GNI in the sequel) and its principal features have been presented in \cite{SDD1,SDD2,SDM}.  As main geometric properties, we can mention that it preserves the nonholonomic constraints, the discrete nonholonomic momentum map in the presence of horizontal symmetries, and the energy of the system under certain symmetry conditions \cite{SDD1}.

\begin{definition}
Consider a discrete Lagrangian $L_{d}:Q\times Q\Flder\R$. The proposed {\rm discrete nonholonomic equations} are
\begin{subequations}\label{GNIint}
\begin{align}
\mathcal{P}_{q_{k}}^*\lp D_{1}L_{d}\lp q_{k},q_{k+1}\rp\rp+\mathcal{P}_{q_{k}}^*\lp D_{2}L_{d}\lp q_{k-1},q_{k}\rp\rp=0,\label{GNIinta}\\
\mathcal{Q}_{q_{k}}^*\lp D_{1}L_{d}\lp q_{k},q_{k+1}\rp\rp-\mathcal{Q}_{q_{k}}^*\lp D_{2}L_{d}\lp q_{k-1},q_{k}\rp\rp=0,\label{GNIintb}
\end{align}
\end{subequations}
which define the GNI integrator.
\end{definition}
The projectors $\mathcal{P}$, $\mathcal{Q}$ are defined in the previous sections, while the subscript $q_{k}$ emphasizes that the projections take place in the fiber $T^*_{q_{k}}Q$. The first equation is just the projection of the discrete Euler-Lagrange equation to the constraint distribution $\mathcal{D}$, while the second one can be interpreted as an elastic impact of the system against $\mathcal{D}$ (see \cite{DIbort}). Note that since $\mathcal{P}$ and $\mathcal{Q}$ are orthogonal and complementary, \eqref{GNIint} is equivalent to
\begin{equation}\label{Comb}
D_{1}L_{d}(q_{k},q_{k+1})+(\mathcal{P}^*-\mathcal{Q}^*)\,D_{2}L_{d}(q_{k-1},q_{k})=0.
\end{equation}
From these equations we see that the system defines a unique discrete evolution operator if and only if the matrix $\lp D_{12}L_{d}\rp$ is regular, that is, the discrete Lagrangian is regular. Locally, equations \eqref{GNIint} can be written as
\begin{subequations}\label{GNIloc}
\begin{align}
D_{1}L_{d}\lp q_{k},q_{k+1}\rp+D_{2}L_{d}\lp q_{k-1},q_{k}\rp&=\lp\lambda_{k}\rp_{b}\mu^{b}(q_k),\label{GNIloca}\\
{\mathcal G}^{ij}\lp q_{k}\rp\mu_{i}^{a}\lp q_{k}\rp\lp\frac{\der L_{d}}{\der x^{j}}\lp q_{k},q_{k+1}\rp -\frac{\der L_{d}}{\der y^{j}}\lp q_{k-1},q_{k}\rp\rp&=0\label{GNIlocb}.
\end{align}
\end{subequations}
Using the discrete Legendre transformations defined above, let us define the pre- and post-momenta, which are covectors at $q_{k}$, by
\begin{align*}
p^{+}_{k-1,k}=&p^{+}\lp q_{k-1},q_{k}\rp=\F L_{d}^+\lp q_{k-1},q_{k}\rp=D_{2}L_{d}\lp q_{k-1},q_{k}\rp\\
p^{-}_{k,k+1}=&p^{-}\lp q_{k},q_{k+1}\rp=\F L_{d}^-\lp q_{k},q_{k+1}\rp=-D_{1}L_{d}\lp q_{k},q_{k+1}\rp.
\end{align*}
Then, the second GNI equation (\ref{GNIlocb}) can be rewritten as follows:
\[
{\mathcal G}^{ij}\lp q_{k}\rp\mu_{i}^{a}\lp q_{k}\rp\lp\frac{( p_{k,k+1}^{-})_{j}+( p_{k-1,k}^{+})_{j}}{2}\rp=0,
\]
which means that the average of pre- and post-momenta satisfies the constraints. In this sense the proposed numerical method preserves exactly the nonholonomic constraints. Besides this preservation property, the GNI has other interesting geometric features like the preservation of energy when the configuration manifold is a Lie group with a Lagrangian defined by a bi-invariant metric, with an arbitrary distribution $\mathcal{D}$ and a discrete Lagrangian that is left-invariant (see \cite{SDD1} for further details).

\section{GNI extensions of symplectic-Euler methods}\label{SymEuExt}
Let us consider the tangent $TQ$ and cotangent $T^*Q$ bundles of the configuration manifold $Q=\R^n$ and its local coordinates, $(q,\dot q)$ and $(q,p)$ respectively. Moreover, let us consider the mechanical Lagrangian $L(q,\dot q)=\frac{1}{2}\,\dot q^T\,M\,\dot q-V(q)$, where $M$ is a $n\times n$ constant regular matrix and $V:Q\Flder\R$ is the potential function. On the other hand, the function $H(q,p)=\frac{1}{2}\,p^T\,M^{-1}\,p+V(q)$ is its Hamiltonian counterpart.

It is well known that the explicit and implicit Euler methods (which we will denote Euler A and Euler B respectively)
\[
\begin{array}{c@{\hskip 1.5cm}c}
\mbox{Euler A} & \mbox{Euler B}\\
q_{k+1}=q_{k}+hM^{-1}p_{k} & q_{k+1}=q_{k}+hM^{-1}p_{k+1}\\
p_{k+1}=p_{k}-h\displaystyle\frac{\der V}{\der q}\lp q_{k}\rp & p_{k+1}=p_{k}-h\displaystyle\frac{\der V}{\der q}\lp q_{k+1}\rp
\end{array}
\]
are symplectic and of order one (see \cite{Hairer}). As variational integrators (see \cite{MarsdenWest}) they correspond to the following discrete Lagrangians:
\begin{small}
\begin{equation}\label{LagEuler}
L_{d}^{A}(q_{k},q_{k+1})=hL\left(q_{k},\frac{q_{k+1}-q_{k}}{h}\right),\,\,\,L_{d}^{B}(q_{k},q_{k+1})=hL\left(q_{k+1},\frac{q_{k+1}-q_{k}}{h}\right).
\end{equation}
\end{small}%
Applying the GNI equations \eqref{GNIloc} to the Lagrangians in (\ref{LagEuler}) we obtain the following numerical schemes:
\vspace{0.3cm}

$\bullet$ {\bf Euler A}:
\begin{subequations}\label{EuALag}
\begin{align}
q_{k+1}-2q_{k}+q_{k-1}&=-h\cua M^{-1}\lp V_{q}(q_{k})+\mu^{T}\lp q_{k}\rp\tilde\lambda_{k}\rp,\label{EuALaga}\\
0&=\mu(q_{k})\lp\frac{q_{k+1}-q_{k-1}}{2h}+\frac{h}{2}M^{-1}V_q(q_k)\rp.\label{EuALagb}
\end{align}
\end{subequations}

$\bullet$ {\bf Euler B}:
\begin{subequations}\label{EuBLag}
\begin{align}
q_{k+1}-2q_{k}+q_{k-1}&=-h\cua M^{-1}\lp V_{q}(q_{k})+\mu^{T}\lp q_{k}\rp\tilde\lambda_{k}\rp,\label{EuBLaga}\\
0&=\mu(q_{k})\lp\frac{q_{k+1}-q_{k-1}}{2h}-\frac{h}{2}M^{-1}V_q(q_k)\rp\label{EuBLagb},
\end{align}
\end{subequations}
where $\tilde\lambda_k=\lambda_k/h$ and $V_q=\der V/\der q$. Observe that the only difference between the two methods lies in the sign between parentheses in \eqref{EuALagb} and \eqref{EuBLagb}. By introducing the momentum quantities $\tilde p_k=M(q_{k+1}-q_{k-1})/2h$ and $p_{k+1/2}=M(q_{k+1}-q_k)/h$, we can rewrite equations \eqref{EuALag} and \eqref{EuBLag} as follows.
\vspace{0.3cm}

$\bullet$ {\bf Euler A}:
\begin{subequations}\label{EuAHam}
\begin{align}
&p_{k+1/2}=\tilde p_{k}-\frac{h}{2}\lp V_{q}(q_{k})+\mu^{T}(q_{k})\tilde\lambda_{k}\rp,\label{EuAHama}\\
&q_{k+1}=q_{k}+hM^{-1}p_{k+1/2},\label{EuAHamb}\\
&\mu(q_{k}) M^{-1}\lp\tilde p_{k}+\frac{h}{2}V_q(q_k)\rp=0,\label{EuAHamc}\\
&\tilde p_{k+1}=p_{k+1/2}-\frac{h}{2}\lp V_{q}(q_{k+1})+\mu^{T}(q_{k+1})\tilde\lambda_{k+1}\rp,\label{EuAHamd}\\
&\mu(q_{k+1}) M^{-1}\lp\tilde p_{k+1}+\frac{h}{2}V_q(q_{k+1})\rp=0.\label{EuAHame}
\end{align}
\end{subequations}

$\bullet$ {\bf Euler B}:
\begin{subequations}\label{EuBHam}
\begin{align}
&p_{k+1/2}=\tilde p_{k}-\frac{h}{2}\lp V_{q}(q_{k})+\mu^{T}(q_{k})\tilde\lambda_{k}\rp,\label{EuBHama}\\
&q_{k+1}=q_{k}+hM^{-1}p_{k+1/2},\label{EuBHamb}\\
&\mu(q_{k}) M^{-1}\lp\tilde p_{k}-\frac{h}{2}V_q(q_k)\rp=0,\label{EuBHamc}\\
&\tilde p_{k+1}=p_{k+1/2}-\frac{h}{2}\lp V_{q}(q_{k+1})+\mu^{T}(q_{k+1})\tilde\lambda_{k+1}\rp,\label{EuBHamd}\\
&\mu(q_{k+1}) M^{-1}\lp\tilde p_{k+1}-\frac{h}{2}V_q(q_{k+1})\rp=0.\label{EuBHame}
\end{align}
\end{subequations}
These numerical schemes provide values at step $k+1$ through an intermediate momentum step $k+1/2$, i.e.,
\[(q_k,\tilde p_k,\tilde\lambda_k)\Flder(q_{k+1},p_{k+1/2},\tilde\lambda_k)\Flder(q_{k+1},\tilde p_{k+1},\tilde\lambda_{k+1}). \]
We recognize in \eqref{EuAHamc}, \eqref{EuAHame} and \eqref{EuBHamc}, \eqref{EuBHame} a Hamiltonian version for the discretization of the nonholonomic constraints \eqref{EuALagb} and \eqref{EuBLagb} (Lagrangian version). These constraints are provided by the GNI equations \eqref{GNIintb} or \eqref{GNIlocb}.

\begin{remark}
\label{remark1}
Method \eqref{EuALag} (and the corresponding B version) clearly resembles the extension of the SHAKE method  (see \cite{SHAKE}) proposed by R. McLachlan and M. Perlmutter \cite{McLPerl} as a reversible method for nonholonomic systems {\it not based} on the discrete Lagrange-d'Alembert principle, namely
\begin{align*}
q_{k+1}-2q_{k}+q_{k+1}&=-h^2M^{-1}\lp V_q(q_k)+\mu^T(q_k)\tilde\lambda_k\rp,\\
0&=\mu(q_k)\lp\frac{q_{k+1}-q_{k-1}}{2h}\rp.
\end{align*}
At the same time, the SHAKE method is an extension of the classical St\"ormer-Verlet method in the presence of holonomic constraints.  The RATTLE method is algebraically equivalent to SHAKE \cite{EqAlge}. Its nonholonomic extension, introduced for the first time in \cite{McLPerl}, that is
\begin{subequations}\label{nonholonomicRATTLE}
\begin{align}
&p_{k+1/2}=\tilde p_{k}-\frac{h}{2}\lp V_{q}(q_{k})+\mu^{T}(q_{k})\tilde\lambda_{k}\rp,\\
&q_{k+1}=q_{k}+hM^{-1}p_{k+1/2},\\
&\mu(q_{k}) M^{-1}\tilde p_{k}=0,\\
&\tilde p_{k+1}=p_{k+1/2}-\frac{h}{2}\lp V_{q}(q_{k+1})+\mu^{T}(q_{k+1})\tilde\lambda_{k+1}\rp,\\
&\mu(q_{k+1}) M^{-1}\tilde p_{k+1}=0,
\end{align}
\end{subequations}
(see \cite{SDD1}) clearly resembles \eqref{EuAHam}.

As shown in \cite{SDD1}, the nonholonomic SHAKE extension can be obtained by applying the GNI equations to the discrete Lagrangian
\begin{equation}\label{Verlet}
L_d(q_k,q_{k+1})= \frac{h}{2}L\left(q_k,\frac{q_{k+1}-q_k}{h}\right)+\frac{h}{2}L\left(q_{k+1},\frac{q_{k+1}-q_k}{h}\right),
\end{equation}
which also provides the St\"ormer-Verlet method in the variational integrators sense. Moreover, as shown in \cite{SDD2}, the nonholonomic RATTLE method \eqref{nonholonomicRATTLE} is globally second-order convergent.
\end{remark}

\begin{theorem}\label{Teo1}
The nonholonomic extension of the Euler A (B) method is globally first-order convergent.
\end{theorem}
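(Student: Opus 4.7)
My approach follows the classical consistency-plus-stability route: show that the one-step map has local truncation error $O(h^2)$ and that it is uniformly Lipschitz in $h$ on compact subsets of the (discrete) constraint manifold, so that a standard Gronwall argument yields global error $O(h)$ on bounded time intervals. The Euler B case is completely analogous, with the sign of the $(h/2)V_q$ shift reversed throughout.

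\textbf{Local truncation error.} The key observation is that, because of the $h$-dependent constraint \eqref{EuAHamc}, the natural reference trajectory for $(q_k,\tilde p_k)$ is not $(q(t_k),p(t_k))$ but the $h$-shifted curve $(q(t_k),P(t_k))$ with $P(t):=p(t)-(h/2)V_q(q(t))$; this satisfies the discrete constraint exactly, since $\mu M^{-1}(P+(h/2)V_q)=\mu M^{-1}p=0$. Eliminating $p_{k+1/2}$ between \eqref{EuAHama} and \eqref{EuAHamd} produces the Verlet-type update
\begin{align*}
\tilde p_{k+1}-\tilde p_k &= -\tfrac{h}{2}\bigl[V_q(q_k)+V_q(q_{k+1})\bigr] - \tfrac{h}{2}\bigl[\mu^T(q_k)\tilde\lambda_k+\mu^T(q_{k+1})\tilde\lambda_{k+1}\bigr],\\
q_{k+1}-q_k &= hM^{-1}\tilde p_k - \tfrac{h^2}{2}M^{-1}\bigl(V_q(q_k)+\mu^T(q_k)\tilde\lambda_k\bigr).
\end{align*}
Taylor-expanding the continuous nonholonomic flow along $(q(\cdot),P(\cdot))$ and using the implicit determination $\tilde\lambda_k=\lambda(t_k)+O(h)$ (obtained by combining \eqref{EuAHame} with the preceding step's \eqref{EuAHama}), one verifies that each of these discrete differences matches its continuous counterpart to $O(h^2)$. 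Since $P(t_k)-p(t_k)=O(h)$, the total error against the true solution is $O(h)$ in $\tilde p$ and $O(h^2)$ in $q$, exactly the first-order bound.

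\textbf{Stability and main obstacle.} Stability follows from applying the implicit function theorem to the multiplier equations: combining \eqref{EuAHamc}, \eqref{EuAHamd}, \eqref{EuAHame} gives the closed-form expression $\tilde\lambda_{k+1}=(2/h)C^{-1}(q_{k+1})\mu(q_{k+1})M^{-1}p_{k+1/2}$, and the invertibility of $C^{ab}=\mu M^{-1}\mu^T$ (guaranteed by the Riemannian structure of $\mathcal{G}$) yields a one-step map whose first derivatives are bounded uniformly in $h$ on compact subsets of phase space. A standard discrete Gronwall estimate then upgrades the per-step LTE to a global error of order $h$. The chief technical subtlety is uniform control of $\tilde\lambda_k$ as $h\to 0$, since the discrete constraint manifold itself depends on $h$; this amounts to showing that $C^{ab}(q)$ stays uniformly nonsingular along the numerical trajectory, which is immediate from continuity once initial data are chosen compatibly on the continuous constraint. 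An arguably cleaner alternative, bypassing this difficulty entirely, is to view Euler A as the $O(h)$ constraint-perturbation of the nonholonomic RATTLE scheme \eqref{nonholonomicRATTLE}: a one-step perturbation estimate then converts its global second-order convergence (Remark \ref{remark1} and \cite{SDD2}) directly into first-order convergence of Euler A.
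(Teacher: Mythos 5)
Your argument is sound in its main outline but takes a genuinely different route from the paper's. The paper does not run a consistency-plus-stability analysis against the continuous flow directly; it first eliminates the Lagrange multiplier from the continuous equations to get the explicit ODE \eqref{H}, writes down a reference discretization \eqref{P2} of that ODE which is first-order by standard Taylor arguments, and then shows that Euler A, put in the multiplier-free leapfrog form \eqref{P3}, coincides with \eqref{P2} up to $O(h^2)$ per step --- the key computation being the Taylor expansion \eqref{P4} of the constraint along the trajectory. Your direct approach buys a cleaner conceptual statement, and your observation that the $h$-shifted momentum $P=p-(h/2)V_q(q)$ lies exactly on the $h$-dependent discrete constraint manifold is a genuinely useful device that the paper does not use. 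What the paper's route buys is that the multipliers never have to be compared with their continuous values: they are eliminated on both the continuous and the discrete side before the two schemes are matched.

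Two points in your write-up need repair. First, the claim $\tilde\lambda_k=\lambda(t_k)+O(h)$ is asserted without justification and is subtler than it looks: combining \eqref{EuAHamc} with \eqref{EuAHama} and \eqref{EuAHame} only shows that the \emph{average} $(\tilde\lambda_k+\tilde\lambda_{k+1})/2$ approximates the continuous multiplier to $O(h)$; the individual $\tilde\lambda_k$ may carry an $O(1)$ alternating component depending on how $\tilde\lambda_0$ is initialized. This turns out to be harmless, because only the symmetric combination $\mu^T(q_k)\tilde\lambda_k+\mu^T(q_{k+1})\tilde\lambda_{k+1}$ enters your Verlet-type momentum update and the oscillating part then contributes only at $O(h^2)$ per step --- but that cancellation must be stated, since the naive claim is what your truncation estimate rests on. Second, the proposed ``cleaner alternative'' does not work as stated: comparing your closed form $\tilde\lambda_{k+1}=(2/h)C^{-1}(q_{k+1})\mu(q_{k+1})M^{-1}p_{k+1/2}$ with the corresponding expression for nonholonomic RATTLE \eqref{nonholonomicRATTLE} shows that the two multipliers differ by $C^{-1}\mu M^{-1}V_q=O(1)$, so the two one-step maps differ by $O(h)$ in the momentum; a generic one-step perturbation of size $O(h)$ accumulated over $O(1/h)$ steps yields no convergence order at all. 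To salvage that route one must again exploit the $O(h)$ shift between the two constraint manifolds --- i.e.\ exactly the shifted-trajectory argument --- at which point nothing has been saved over the direct proof.
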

It will be useful in the following proof to give a Hamiltonian version of \eqref{LdAeqs} when $H(q,p)=\frac{1}{2}\,p^T M^{-1}p+V(q)$, namely
\begin{align*}
\dot q&=M^{-1}p,\\
\dot p&=-V_q(q)-\mu^T(q)\lambda,\\
\mu(q)\,M^{-1}p&=0.
\end{align*}
Since the constraints are satisfied along the solutions, we can differentiate them w.r.t.\ time in order to obtain the actual values of the Lagrange multipliers, i.e.
\[
\lambda=\mathcal{C}^{-1}\lp\mu_q[M^{-1}p,M^{-1}p]-\mu M^{-1}V_q\rp,
\]
where $\mathcal{C}(q)=\mu(q) M^{-1}\mu^{T}(q)$ is a regular matrix and $\mu_{q}[M^{-1}p,M^{-1}p]$ is the $m\times 1$ matrix
$\frac{\der\mu^{\alpha}_{i}}{\der q^{j}}\lp M^{-1}\rp^{jj^{\prime}}p_{j^{\prime}}\lp  M^{-1}\rp^{ii^{\prime}}p_{i^{\prime}}$. Taking this into account, the Hamiltonian nonholonomic system becomes
\begin{subequations}\label{H}
\begin{align}
\dot q&=M^{-1}p,\label{Ha}\\
\dot p&=-V_{q}-\mu^{T}\mathcal{C}^{-1}\lp\mu_{q}[M^{-1}p,M^{-1}p]-\mu M^{-1}V_q\rp\label{Hb},
\end{align}
\end{subequations}
with initial condition satisfying $\mu(q)M^{-1}p=0$.
\begin{proof} [Proof of Theorem \ref{Teo1}]

We present the proof for the Euler A method, the corresponding proof for Euler B is analogous.

Consider the unconstrained problem
\begin{align*}
\dot q&=M^{-1}p,\\
\dot p&=\phi\lp q,p\rp,
\end{align*}
with a smooth enough function $\phi:\R^{2n}\Flder\R$. These equations can be discretized by
\begin{subequations}\label{P1}
\begin{align}
q_{k+1}&=q_{k}+hM^{-1} p_{k+1/2},\label{P1a}\\
p_{k+1/2}&=p_{k-1/2}+h\phi\lp q_{k},p_{k+1/2}\rp,\label{P1b}
\end{align}
\end{subequations}
which is a globally first-order convergent method, using standard arguments of Taylor expansions. Therefore, taking into account equations \eqref{H}, from \eqref{P1} we deduce the following first-order method for the nonholonomic system
\begin{subequations}\label{P2}
\begin{align}
q_{k+1}&=q_{k}+hM^{-1}p_{k+1/2},\label{P2a}\\
p_{k+1/2}&=p_{k-1/2}-hV_{q}\lp q_{k}\rp+h\mu^{T}\lp q_{k}\rp\mathcal{C}^{-1}\lp q_{k}\rp\mu\lp q_{k}\rp M^{-1}V_{q}\lp q_{k}\rp\nonumber\\
&\quad -h\mu^{T}\lp q_{k}\rp\mathcal{C}^{-1}\lp q_{k}\rp\mu_{q}[M^{-1}p_{k+1/2},M^{-1}p_{k+1/2}].\label{P2b}
\end{align}
\end{subequations}
The next step is to prove that the nonholonomic Euler A method \eqref{EuAHam} reproduces \eqref{P2}. From equations \eqref{EuAHam} we see that the nonholonomic Euler A method assumes the form
\begin{align*}
q_{k+1}&=q_{k}+hM^{-1}p_{k+1/2},\\
p_{k+1/2}&=p_{k-1/2}-hV_q(q_k)-h\mu^T(q_k)\tilde\lambda_k,\\
0&=\mu(q_k)M^{-1}\lp\frac{p_{k+1/2}+p_{k-1/2}}{2}+\frac{h}{2}V_q(q_k)\rp
\end{align*}
or, after some computations,
\begin{subequations}\label{P3}
\begin{align}
q_{k+1}&=q_{k}+hM^{-1}p_{k+1/2},\label{P3a}\\
p_{k+1/2}&=p_{k-1/2}-hV_{q}(q_{k})-2\mu^{T}(q_{k})\mathcal{C}^{-1}(q_{k})\mu(q_{k}) M^{-1}p_{k-1/2}.\label{P3b}
\end{align}
\end{subequations}
On the other hand we can expand the nonholonomic constraints around $q\lp 0\rp$:
\[
\mu\lp q\lp h\rp\rp\dot q\lp h\rp=\mu\lp q\lp 0\rp\rp\dot q\lp 0\rp+h\mu\lp q\lp 0\rp\rp\ddot q\lp 0\rp+h\mu_{q}[\dot q\lp 0\rp,\dot q\lp 0\rp]+\mathcal{O}\lp h\cua\rp.
\]
Since the constraints are satisfied at $t=0$ and $t=h$, the previous expression becomes
\[
h\mu\lp q\lp 0\rp\rp\ddot q\lp 0\rp=-h\mu_{q}[\dot q\lp 0\rp,\dot q\lp 0\rp]+\mathcal{O}\lp h\cua\rp.
\]
Now, taking standard aproximations for first and second derivatives we deduce that
\begin{multline}
-2\mu(q_k)M^{-1}p_{k-1/2}=-h\mu_q[M^{-1}p_{k+1/2},M^{-1}p_{k+1/2}]\\
+h\mu(q_k)M^{-1}V_q(q_k)+\mathcal{O}(h^2).\label{P4}
\end{multline}
Therefore, substituting \eqref{P4} into \eqref{P3b} we recognize equation \eqref{P2b} up to $\mathcal{O}(h^2)$ terms. Thus, we conclude that the nonholonomic Euler A method \eqref{EuAHam} is first-order convergent.
\end{proof}

\begin{definition}
For a one-step method $F:T^*Q\Flder T^*Q$, the adjoint method $F^*:T^*Q\Flder T^*Q$ is defined by
\[
(F^*)^{h}\circ F^{-h}=\mbox{Id}_{_{T^*Q}}.
\]
\end{definition}

\begin{theorem}\label{adThe}
The nonholonomic extensions of the Euler A and B methods are one another's adjoint.
\end{theorem}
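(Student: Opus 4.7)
My plan is to verify the definition directly, by showing that the composition $(F_B)^h \circ (F_A)^{-h}$ of nonholonomic Euler B with step $h$ and nonholonomic Euler A with step $-h$ is the identity on $T^*Q$. Concretely, I pick an arbitrary state $(q_{k+1},\tilde p_{k+1})$, apply the Euler A scheme \eqref{EuAHam} with step size $-h$ to obtain a state $(q_k,\tilde p_k)$, and then check that the Euler B scheme \eqref{EuBHam} with step $h$ starting from $(q_k,\tilde p_k)$ returns $(q_{k+1},\tilde p_{k+1})$.

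The first step is to write out what $\Phi^A_{-h}$ means by substituting $-h$ for $h$ in each equation of \eqref{EuAHam}. Two structural changes occur. The two half-kick updates \eqref{EuAHama} and \eqref{EuAHamd} flip the sign in front of $h/2$; once rearranged so that $(q_{k+1},\tilde p_{k+1})$ appears as the output and $(q_k,\tilde p_k)$ as the input, they reproduce the forward half-kicks \eqref{EuBHama} and \eqref{EuBHamd} of Euler B, while the position update \eqref{EuAHamb} inverts to $q_{k+1}=q_k+hM^{-1}p_{k+1/2}$, i.e., \eqref{EuBHamb}. Crucially, the two constraint equations \eqref{EuAHamc} and \eqref{EuAHame}, which read $\mu(q)M^{-1}\lp\tilde p + \frac{h}{2}V_q(q)\rp=0$, turn under $h\mapsto -h$ into $\mu(q)M^{-1}\lp\tilde p - \frac{h}{2}V_q(q)\rp=0$, which are precisely the Euler B constraints \eqref{EuBHamc} and \eqref{EuBHame}. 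This sign flip in the constraint is the essential ingredient that makes A and B adjoint of one another rather than each being self-adjoint.

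The Lagrange multipliers at $q_k$ and $q_{k+1}$ produced by the reversed-A step are uniquely determined by these constraints, because $\mathcal{C}=\mu M^{-1}\mu^T$ is regular; consequently they can be consistently identified with the Euler B multipliers at the corresponding nodes. A term-by-term comparison with \eqref{EuBHam} then shows that the composition acts as the identity, and the reverse direction is symmetric. The main obstacle is essentially bookkeeping: one must keep careful track of which intermediate momentum and which constraint corresponds to which node after the time reversal, and verify that the constraints at $q_k$ and at $q_{k+1}$ get permuted in the correct order. Regularity of the discrete Lagrangian guarantees that every one-step map is invertible, so no further analytical work is required. The result parallels the classical fact that the unconstrained symplectic-Euler A and B methods are mutual adjoints, and confirms that the GNI projection preserves this structural pairing.
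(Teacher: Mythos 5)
Your argument is correct, and while it verifies the same defining identity as the paper, it does so by a genuinely different mechanism. The paper writes both schemes as explicit one-step maps \eqref{FA}--\eqref{FB} on triples $(q_k,\tilde p_k,\tilde\lambda_k)$ and then computes the composition $F_A^{h}\circ F_B^{-h}$ head-on, substituting \eqref{FBa}--\eqref{FBb} into \eqref{FAa}, manipulating the constraint equations, and invoking the regularity of $\mathcal{C}(q_k)=\mu(q_k)M^{-1}\mu^{T}(q_k)$ to force $\tilde\lambda_{k+1}'=\tilde\lambda_k$ and finally $\tilde p_{k+1}'=\tilde p_k$. You instead observe that the substitution $h\mapsto -h$ in \eqref{EuAHam}, followed by exchanging the roles of input and output (which permutes the two half-kicks and the two constraint nodes), reproduces \eqref{EuBHam} verbatim; in particular the sign flip turns the A-constraints \eqref{EuAHamc}, \eqref{EuAHame} into the B-constraints \eqref{EuBHamc}, \eqref{EuBHame}. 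Hence $\left(F_A^{-h}\right)^{-1}=F_B^{h}$ by inspection and the composition $F_B^{h}\circ F_A^{-h}$ is the identity with no cancellation computation. (You prove $F_B=F_A^{*}$ where the paper proves $F_A=F_B^{*}$; since the adjoint operation is an involution, the two formulations are equivalent.) Your route has the advantage of isolating the structural reason for adjointness --- the modified constraint $\mu M^{-1}\lp\tilde p\pm\frac{h}{2}V_q\rp=0$ is odd in $h$ --- and of making explicit that the image of $F_A^{-h}$ automatically satisfies the B-constraints, so the composition is well defined on the relevant constraint set; the paper's route has the advantage of tracking the multipliers explicitly through the composition, a point you cover more briefly by noting they are uniquely determined via the regularity of $\mathcal{C}$.
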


\begin{proof}
We will use a shorthand notation to define both integrators:
\begin{align*}
F_A(q_k,\tilde p_k,\tilde\lambda_k)&=(q_{k+1}^A,\tilde p_{k+1}^A,\tilde\lambda_{k+1}^A)\\
F_B(q_k,\tilde p_k,\tilde\lambda_k)&=(q_{k+1}^B,\tilde p_{k+1}^B,\tilde\lambda_{k+1}^B).
\end{align*}
Equations \eqref{EuAHam} and \eqref{EuBHam} can be rewritten to give a one-step method instead of the leap-frog presented. For instance, for $F_A$,
\begin{subequations}\label{FA}
\begin{align}
q_{k+1}^A&=q_k+hM^{-1}\tilde p_k-\frac{h^2}{2}M^{-1}V_q(q_k)-\frac{h^2}{2}M^{-1}\mu^T(q_k)\tilde\lambda_k,\label{FAa}\\
\tilde p_{k+1}^A&=\tilde p_k-\frac{h}{2}V_q(q_k)-\frac{h}{2}\mu^T(q_k)\tilde\lambda_k-\frac{h}{2}V_q(q_{k+1}^A)-\frac{h}{2}\mu^T(q_{k+1}^A)\tilde\lambda_{k+1}^A,\label{FAb}\\
0&=\mu(q_{k+1}^A)M^{-1}\tilde p_k-\frac{h}{2}\mu(q_{k+1}^A)M^{-1}V_q(q_k)-\frac{h}{2}\mu(q_{k+1}^A)M^{-1}\mu^T(q_k)\tilde\lambda_k\nonumber\\
&\quad -\frac{h}{2}\mu(q_{k+1}^A)M^{-1}\mu^T(q_{k+1}^A)\tilde\lambda_{k+1}^A,\label{FAc}
\end{align}
\end{subequations}
where $\tilde p_{k+1}^A$ and $\tilde\lambda_{k+1}^A$ are implicitly obtained from \eqref{FAb} and \eqref{FAc}. The same occurs for $F_B$:
\begin{subequations}\label{FB}
\begin{align}
q_{k+1}^B&=q_k+hM^{-1}\tilde p_k-\frac{h^2}{2}M^{-1}V_q(q_k)-\frac{h^2}{2}M^{-1}\mu^T(q_k)\tilde\lambda_k,\label{FBa}\\
\tilde p_{k+1}^B&=\tilde p_k-\frac{h}{2}V_q(q_k)-\frac{h}{2}\mu^T(q_k)\tilde\lambda_k-\frac{h}{2}V_q(q_{k+1}^B)-\frac{h}{2}\mu^T(q_{k+1}^B)\tilde\lambda_{k+1}^B,\label{FBb}\\
0&=-\mu(q_k)M^{-1}\tilde p_{k+1}^B
-\frac{h}{2}\mu(q_k)M^{-1}V_q(q_{k+1}^B)
-\frac{h}{2}\mu(q_k)M^{-1}\mu^T(q_k)\tilde\lambda_k
\nonumber\\
&\quad -\frac{h}{2}\mu(q_k)M^{-1}\mu^T(q_{k+1}^B)\tilde\lambda_{k+1}^B
.\label{FBc}
\end{align}
\end{subequations}

The point of the proof is to show that $F_A^h\circ F_B^{-h}(q_k,\tilde p_k,\tilde\lambda_k)=(q_k,\tilde p_k,\tilde\lambda_k)$. In order to do that, we are going to use the notation
\begin{align*}
F_B^{-h}(q_k,\tilde p_k,\tilde\lambda_k)&=(q_{k+1},\tilde p_{k+1},\tilde\lambda_{k+1})=(q_k^{\prime},\tilde p_k^{\prime},\tilde\lambda_k^{\prime}),\\
F_A^{h}(q_k^{\prime},\tilde p_k^{\prime},\tilde\lambda_k^{\prime})&=(q_{k+1}^{\prime},\tilde p_{k+1}^{\prime},\tilde\lambda_{k+1}^{\prime}),
\end{align*}
so we need to show that $(q_{k+1}^{\prime},\tilde p_{k+1}^{\prime},\tilde\lambda_{k+1}^{\prime})=(q_k,\tilde p_k,\tilde\lambda_k)$. After setting the time step to $-h$ and replacing \eqref{FBa} and \eqref{FBb} into \eqref{FAa}, it is easy to check that $q_{k+1}^{\prime}=q_k$. Furthermore, fixing $-h$ again as the time step and taking into account equation \eqref{EuBHame}, from \eqref{FBc} we arrive to
\begin{multline*}
-\frac{h}{2}M^{-1}\mu^T(q_k^{\prime})\tilde\lambda_k^{\prime}=\\
-M^{-1}\tilde p_k-\frac{h}{2}V_q(q_k)-M^{-1}\tilde p_k^{\prime}+\frac{h}{2}V_q(q_k^{\prime})+\frac{h}{2}M^{-1}\mu^T(q_k)\tilde\lambda_k.
\end{multline*}
Replacing this expression into \eqref{FAc}, considering that $q_{k+1}^{\prime}=q_k$ and taking into account $\eqref{EuAHame}$ we find that
\[
\frac{h}{2}\mu(q_k)M^{-1}\mu^T(q_k)\tilde\lambda_k-\frac{h}{2}\mu(q_k)M^{-1}\mu^T(q_k)\tilde\lambda_{k+1}^{\prime}=0,
\]
which means
\[
\tilde\lambda_{k+1}^{\prime}=\tilde\lambda_k
\]
since $\mathcal{C}(q_k)$ is regular. Finally, replacing \eqref{FBb} into \eqref{FAb} we find that $\tilde p_{k+1}^{\prime}=\tilde p_k$.
\end{proof}
\begin{remark}
As shown in \cite{MarsdenWest}, the composition of Hamiltonian discrete flows, in the {\it variational integrators} sense, generated by the discrete Lagrangians \eqref{LagEuler} reproduces the RATTLE algorithm in the free case (that is, not constrained). More concretely, the composition
\[
F_{L_A}^{h/2}\circ F_{L_B}^{h/2}
\]
produces the algorithm
\begin{align*}
p_{k+1/2}&=\tilde p_k-\frac{h}{2}V_q(q_k),\\
q_{k+1}&=q_k+hM^{-1}p_{k+1/2},\\
\tilde p_{k+1}&= p_{k+1/2}-\frac{h}{2}V_q(q_{k+1}).
\end{align*}
Unfortunately, this is no longer true in the nonholonomic case, i.e., one can check that the composition (with time step $h/2$) of methods $\eqref{EuAHam}$ and $\eqref{EuBHam}$ does not reproduce the equations presented in remark \ref{remark1}. However, this composition still generates a second order method since the intermediate steps are first order methods which are each other's adjoint (as we have just proved).
\end{remark}

\section{Affine extension of the GNI}\label{AffS}

We consider in this section the case of affine noholonomic constraints determined by an affine subbundle ${\mathcal A}$ of $TQ$ modeled on a vector subbundle ${\mathcal D}$. We will assume, in the sequel, that there exists a globally defined vector field $Y\in \mathfrak{X}(Q)$ such that $v_q\in {\mathcal A}_q$ if and only if
 $v_q-Y(q)\in {\mathcal D}_q$.
 Therefore, if ${\mathcal D}$ is determined by constraints $\mu^a_i(q)\dot{q}^i=0$, then ${\mathcal A}$ is locally determined by the vanishing of the constraints
 \begin{equation}\label{AC}
\phi^{a}\lp q,\dot q\rp=\mu^{a}_{i}(q)\lp\dot q^{i}-Y^i(q)\rp=0,\hspace{2mm} 1\leq a\leq m.
\end{equation}
where $Y=Y^i\frac{\partial}{\partial q^i}$.

In consequence, the initial data defining our {\it nonholonomic affine system} is denoted by the 4-tuple $(\mathcal{D}, {\mathcal G}, Y, V)$, where $\mathcal{D}$ is the distribution, ${\mathcal G}$ the Riemannian metric, $Y$ the globally defined vector field  and $V$ the potential function. By means of the metric, from $Y$, we can uniquely define a 1-form ${\mathcal G}(Y,\cdot)=\Pi\in\Omega^1(Q)$. Locally, $\Pi={\mathcal G}_{ij}Y^j\, dq^i$.

In terms of momenta the nonholonomic constraints  \eqref{AC} can be rewritten as
\begin{equation}\label{ACHam}
\mu^a_i(q)\,{\mathcal G}^{ij}\lp p_j-\Pi_j(q)\rp=0.
\end{equation}
where $p_i={\mathcal G}_{ij}\dot{q}^j$.
\begin{definition}\label{GNIAff}
Consider a discrete Lagrangian $L_d:Q\times Q\Flder\R$. The proposed discrete equations for affine nonholonomic constraints are
\begin{subequations}\label{GNIintAff}
\begin{align}
\mathcal{P}_{q_{k}}^*\lp D_{1}L_{d}\lp q_{k},q_{k+1}\rp\rp&+\mathcal{P}_{q_{k}}^*\lp D_{2}L_{d}\lp q_{k-1},q_{k}\rp\rp=0,\label{GNIintAffa}\\
\mathcal{Q}_{q_{k}}^*\lp D_{1}L_{d}\lp q_{k},q_{k+1}\rp\rp&-\mathcal{Q}_{q_{k}}^*\lp D_{2}L_{d}\lp q_{k-1},q_{k}\rp\rp+2\mathcal{Q}_{q_{k}}^*\Pi=0,\label{GNIintAffb}
\end{align}
\end{subequations}
which define the {\rm affine extension of the GNI method}.
\end{definition}
As before, $\mathcal{Q}$ and $\mathcal{P}$ are the projectors defined in section \ref{Riemm}. Locally, the method \eqref{GNIintAff} can be written as
\begin{subequations}\label{GNIlocAff}
\begin{align}
D_{1}L_{d}(q_{k},q_{k+1})+D_{2}L_{d}(q_{k-1},q_{k})=\lp\lambda_{k}\rp_{b}\mu^{b}(q_k),\label{GNIlocAffa}\\
{\mathcal G}^{ij}(q_{k})\mu_{i}^{a}(q_{k})\lp\frac{\der L_{d}}{\der x^{j}}(q_{k},q_{k+1})-\frac{\der L_{d}}{\der y^{j}}(q_{k-1},q_{k})+2\Pi_j(q_k)\rp=0\label{GNIlocAffb}.
\end{align}
\end{subequations}
Using the pre- and post-momenta defined in section \ref{section4}, equation \eqref{GNIlocAffb} can be rewritten as
\[
{\mathcal G}^{ij}(q_{k})\mu_{i}^{a}(q_{k})\lp\frac{\lp p_{k,k+1}^{-}\rp_{j}+\lp p_{k-1,k}^{+}\rp_{j}}{2}-{\Pi_j(q_k)}\rp=0,
\]
which corresponds to the discretization of the affine constraints \eqref{ACHam} on the Hamiltonian side.

\subsection{A theoretical result: nonholonomic SHAKE and RATTLE extensions for affine systems} Let us consider again the mechanical Lagrangian $L(q,\dot q)=\frac{1}{2}\dot q^T\,M\,\dot q-V(q)$ and the discretization presented in \eqref{Verlet}. Applying the affine GNI equations \eqref{GNIlocAff} we obtain:
\begin{subequations}\label{Afin}
\begin{align}
q_{k+1}-2q_{k}+q_{k-1}&=-h\cua M^{-1}\lp V_{q}(q_{k})+\mu^{T}\lp q_{k}\rp\tilde\lambda_{k}\rp,\label{Afina}\\
0&=\mu(q_{k})\lp\frac{q_{k+1}-q_{k-1}}{2h}-Y(q_k)\rp,\label{Afinb}
\end{align}
\end{subequations}
which can be regarded as the extension of the SHAKE algorithm to affine nonholonomic systems. Denoting $\tilde p_k=M(q_{k+1}-q_{k-1})/2h$ and $p_{k+1/2}=M(q_{k+1}-q_k)/h$, from \eqref{Afin} we arrive to
\begin{align*}
&p_{k+1/2}=\tilde p_{k}-\frac{h}{2}\lp V_{q}(q_{k})+\mu^{T}(q_{k})\tilde\lambda_{k}\rp,\\
&q_{k+1}=q_{k}+hM^{-1}p_{k+1/2},\\
&\mu(q_{k}) M^{-1}\lp\tilde p_{k}-\Pi(q_k)\rp=0,\\
&\tilde p_{k+1}=p_{k+1/2}-\frac{h}{2}\lp V_{q}(q_{k+1})+\mu^{T}(q_{k+1})\tilde\lambda_{k+1}\rp,\\
&\mu(q_{k+1}) M^{-1}\lp\tilde p_{k+1}-\Pi(q_{k+1})\rp=0,
\end{align*}
which can be regarded as the extension of the RATTLE algorithm to affine nonholonomic systems.

\section{Reduced systems}
\label{RS}
In this section we are going to consider configuration spaces of the form $Q=M\times G$, where $M$ is an $n$-dimensional differentiable manifold and $G$ is an $m$-dimensional Lie group ($\al$ will be its corresponding Lie algebra). Therefore, there exists a global  canonical splitting between variables describing the position and variables describing the orientation of the mechanical system. Then, we distinguish the pose coordinates $g\in G$, and the variables describing the internal shape of the system, that is $x\in M$ (in consequence $(x,\dot x)\in TM$).  It is clear that $Q=M\times G$ is the total space of a trivial principal $G$-bundle over $M$, where the bundle projection $\phi: Q\Flder M$ is just the canonical projection onto the first factor. We may consider the corresponding reduced tangent space $E=TQ/G$ over $M$. Identifying $TG$ with $G\times \al$ by using left  translations, $E=TQ/G$ is isomorphic to the product manifold $TM\times \al$ and the vector bundle projection is $\tau_M\circ pr_1$, where $pr_1:TM\times \al\Flder TM$ and $\tau_M:TM\Flder M$ are the canonical projections.

\subsection{The case of linear constraints}\label{wert}

Now suppose that $({\mathcal G},  {\mathcal D}, V)$ is a standard mechanical nonholonomic system on $TQ$ such that all the ingredients are $G$-invariant. In other words, for all $x\in M$ and $g\in G$, 
\begin{itemize}
\item ${\mathcal G}_{(x,g)}((X_x, g\xi), (Y_x, g\eta))={\mathcal G}_{(x, e)}((X_x, \xi), (Y_x, \eta))$ for all $X_x, Y_x\in T_xM$, $\xi, \eta\in {\mathfrak g}$;
\item $(X_x, \xi)\in {\mathcal D}_{(x,e)}$ implies $(X_x, g\xi)\in {\mathcal D}_{(x,g)}$;
\item $V(x, g)=V(x,e)\equiv \tilde{V}(x)$.
\end{itemize}
Therefore, we obtain a new triple $(\tilde{{\mathcal G}},  \widetilde{{\mathcal D}}, \tilde{V})$ on $TM\times {\mathfrak g}$ where
$\tilde{{\mathcal G}}: (TM\times {\mathfrak g})\times (TM\times {\mathfrak g})\longrightarrow \R$ is a bundle metric,  $\widetilde{{\mathcal D}}$ is a vector subbundle of $TM\times {\mathfrak g}\rightarrow M$ and $\tilde{V}: M\rightarrow \R$ is the reduced potential.
With all these ingredients it is possible to write the reduced nonholonomic equations or {\em nonholonomic Lagrange-Poincar\'e equations} (see \cite{BlKrMaMu1996,CoLeMaMar} for all the details, also for the non-trivial case).

Our objective is to find a discrete version of the GNI for the nonholonomic Lagrange-Poincar\'e equations.
As in the previous sections,  we can split the total space $E$ as $E=\tilde\dist\oplus\tilde\dist^{\perp}$, using this time the fibered metric $\tilde{{\mathcal G}}$, and consider the corresponding projectors $\pe:E\Flder\tilde\dist$, $\qu:E\Flder\tilde\dist^{\perp}$. In order to write the discrete nonholonomic equations, it is necessary to set a discrete Lagrangian $L_d: M\times M\times G\Flder\R$,  and the discrete Legendre transforms. Namely (see \cite{MMM}):
\begin{small}
\begin{equation}\label{RedLegTr}
\begin{split}
\F L_d^-\colon M\times M\times G&\to  T^*M\times\dal\\
(x_k, x_{k+1},g_k)&\longmapsto (x_k, -D_1 L_d(x_k, x_{k+1},g_k),r^*_{g_k}D_3L_d(x_k, x_{k+1},g_k)),\\
\F L_d^+\colon  M\times M\times G&\to  T^*M\times\dal\\
(x_k, x_{k+1},g_k)&\longmapsto (x_{k+1}, D_2 L_d(x_k, x_{k+1},g_k),l^*_{g_k}D_3L_d(x_k, x_{k+1},g_k))\; .
\end{split}
\end{equation}
\end{small}%
\begin{definition}\label{GNIRed}
Consider the discrete Legendre transforms defined in \eqref{RedLegTr}. The proposed discrete equations are
\begin{subequations}\label{GNIred}
\begin{align}
&\mathcal{P}_{x_k}^*\lp \F L_d^-(x_k,x_{k+1},g_k)\rp-\mathcal{P}_{x_{k}}^*\lp \F L_d^+(x_{k-1},x_{k},g_{k-1})\rp=0,\label{GNIreda}\\
&\mathcal{Q}_{x_k}^*\lp \F L_d^-(x_k,x_{k+1},g_k)\rp+\mathcal{Q}_{x_{k}}^*\lp \F L_d^+(x_{k-1},x_{k},g_{k-1})\rp=0,\label{GNIredb}
\end{align}
\end{subequations}
which define the {\rm reduced GNI equations}. The subscript $x_k$ emphasizes the fact that the projections take place in the fiber over $x_k$.
\end{definition}
To understand why \eqref{GNIredb} represents a discretization of the nonholonomic constraints, we will work in local coordinates. Take now  local coordinates $(x^i)$ on $M$ and a local basis of sections $\{\tilde{e}_\alpha, \tilde{e}_{a}\}$ of $\Gamma(TM\times\al)$ adapted to the decomposition $\tilde\dist\oplus\tilde\dist^{\perp}$, that is $\tilde{e}_\alpha(x)\in \widetilde{{\mathcal D}}_x$ and $\tilde{e}_a(x)\in \widetilde{{\mathcal D}}^\perp_x$, for all $x\in M$.
We have that
\[
\tilde{{\mathcal G}}(\tilde{e}_\alpha, \tilde{e}_\beta)=\tilde{{\mathcal G}}_{\alpha\beta}, \quad \tilde{{\mathcal G}}(\tilde{e}_a, \tilde{e}_\beta)=0, \quad
\tilde{{\mathcal G}}(\tilde{e}_a, \tilde{e}_b)=\tilde{{\mathcal G}}_{ab}\; .
\]

Consider the induced adapted local coordinates $(x^i,y^\alpha, y^{a})$ for $\Gamma(TM\times\al)$. The nonholonomic constraints are represented by $y^{a}=0$ on $E$. Taking the dual basis $\{\tilde{e}^\alpha, \tilde{e}^{a}\}$ of  $\Gamma(T^*M\times\al^*)$, we have induced local coordinates $(x^i, p_{\alpha}, p_a)$ on the Hamiltonian side,
and now the  nonholonomic constraints are represented by $p_a=0$.

On the other hand, in this basis the projector ${\mathcal Q}$ has the expression
\begin{equation}\label{QQ}
{\mathcal Q}=\tilde{e}^a\otimes\tilde{e}_a .
\end{equation}
Define the pre- and post-momenta by
\begin{align*}
p^-_{x_k}&=\F L_d^-(x_k,x_{k+1},g_k)\in T_{x_k}^*M\times\dal,\\
p^+_{x_k}&=\F L_d^+(x_{k-1},x_{k},g_{k-1})\in T_{x_k}^*M\times\dal.
\end{align*}
From equation \eqref{GNIredb} we obtain
\begin{equation}\label{awqe}
\qu^*_{x_k}\lp\frac{p^+_{x_k} + p^-_{x_k}}{2}\rp=0\; .
\end{equation}
If  $p^+_{x_k}=p^+_{\alpha} e^{\alpha}(x_k)+p^+_{a} e^{a}(x_k)$ and  $p^-_{x_k}=p^-_{\alpha} e^{\alpha}(x_k)+p^-_{a} e^{a}(x_k)$, then condition (\ref{awqe}) is expressed using \eqref{QQ} as
\[
 \frac{p_a^++p_a^-}{2}=0,
\]
which means that the average of post and pre-momenta satisfies the nonholonomic constraints  written on the Hamiltonian side.

\subsection{A theoretical result: RATTLE algorithm for reduced spaces}
\label{RS1}

Let us consider $M=\R^n$. Thus, $Q=\R^n\times G$ and $E=TQ/G\cong T\R^n\times\al$. Take a basis $\{E_s\}$ of the Lie algebra $\al$, and consider the following global basis  of $\Gamma(T\R^n\times \al)$
\[ \left\{ \left( \frac{\partial}{\partial x^i},0 \right),\left( 0,E_s \right) \right\}. \]
Therefore, its dual basis is
\[  \left\{ \left( dx^i,0 \right),\left( 0,E^s \right) \right\}. \]
Writing $dx^i \equiv\left( dx^i,0 \right)$ and $E^s\equiv\left( 0,E^s \right)$ for short, the bundle metric $\tilde{\mathcal{G}}$ is written in this basis of sections as
\[ \tilde{\mathcal{G}}=\tilde{\mathcal{G}}_{ij}dx^i\otimes dx^j+\tilde{\mathcal{G}}_{it}dx^i\otimes E^t+\tilde{\mathcal{G}}_{sj}E^s\otimes dx^j+\tilde{\mathcal{G}}_{st}E^s\otimes E^t, \]
Assume that, in this expression, the coefficients of the bundle metric are symmetric and constant, that is, they  do not depend on the base coordinates $x$.
For instance, a typical example would be the kinetic energy bundle metric corresponding to the Lagrangian
\[
L(x, \dot{x}, \xi)=\frac{1}{2}\dot{x}^T M\dot{x} + \frac{1}{2}\langle \xi, \I \xi\rangle
\]
where $M$ is a regular symmetric matrix and $\I: \al \rightarrow \al^*$ is a symmetric positive definite inertia operator.

 Consider the discrete  Lagrangian $L_d\colon \R^n\times \R^n\times G\to \R$ defined by
\begin{multline*} L_d(x_k,x_{k+1},g_k)=\frac{h}{2}\tilde{\mathcal{G}}_{ij}\left(\frac{x_{k+1}^i-x_k^i}{h}\right)\left(\frac{x_{k+1}^j-x_k^j}{h}\right)+
h\tilde{\mathcal{G}}_{it}\left(\frac{x_{k+1}^i-x_k^i}{h}\right)\frac{\left( \tau^{-1}(g_k) \right)^t}{h}\\
+\frac{h}{2}\tilde{\mathcal{G}}_{st}\frac{\left( \tau^{-1}(g_k) \right)^s}{h}\frac{\left( \tau^{-1}(g_k) \right)^t}{h}
-\frac{h}{2}\left( V(x_k)+V(x_{k+1}) \right),
\end{multline*}
where $\tau:\al\Flder G$ is a {\it retraction map}, which is an analytic local diffeomorphism which maps a neighborhood of $0\in\al$ onto a neighborhood of the neutral element $e\in G$ (see Appendix).
Observe that $\tau^{-1}(g_k)\in \al$ and $\tau^{-1}(g_k)=\left( \tau^{-1}(g_k) \right)^s E_s$.

Additionally, we have the vector subbundle  $\widetilde{{\mathcal D}}$ of $T\R^n\times \al$ prescribing the nonholonomic constraints. Write
$\widetilde{{\mathcal D}}^\circ=\spanop\{\mu_i^ad x^i+\eta^a_s E^s\}$. Equation \eqref{GNIreda} of the GNI method is clearly equivalent to
\[
\F L_d^-(x_k,x_{k+1},g_k)-\F L_d^+(x_{k-1},x_{k},g_{k-1})\in\widetilde{\dist}^{\circ}(x_k),
\]
which in this case splits into
\begin{subequations}\label{Red}
\begin{align}
\frac{1}{h}\tilde{\met}_{ij}(x_{k+1}^j-2x_k^j+x_{k-1}^j)&+\frac{1}{h}\tilde{\met}_{it}\lp(\tau^{-1}(g_{k}))^t
-(\tau^{-1}(g_{k-1})^t)\rp\nonumber\\
+hV_{x^i}(x_k)&=-\lambda_{a,k}\,\mu_i^a(x_k),\label{Reda}\\\nonumber\\
\ell^*_{g_{k-1}}D_3L_d(x_{k-1},x_k,g_{k-1})&-r_{g_{k}}^*D_3L_d(x_{k},x_{k+1},g_{k})=\lambda_{a,k}\,\eta_s^a(x_k)\,E^s,\label{Redb}
\end{align}
\end{subequations}
where $V_{x^i}$ stands for $\der V/\der x^i$, and $\lambda_{a,k}$ are the Lagrange multipliers which might vary in each step.

Equation \eqref{Redb} can be rewritten taking into account the {\it right trivialized tangent retraction map} $\mbox{d}\tau_{\xi}$ for $\xi\in\al$, defined as
\begin{equation}\label{RTT}
\mbox{d}\tau_{\xi}
=
T_{\tau(\xi)}r_{\tau(\xi)^{-1}}
\circ
T_{\xi}\tau\colon\al \to \al,
\end{equation}
where $T_\xi\tau\colon\ T_{\xi}\al\equiv\al \to T_{\tau(\xi)}G$, and its inverse $\mbox{d}\tau_{\xi}^{-1}$ (see also Definition \ref{Retr_again}).

Define the \emph{retracted discrete Lagrangian} $l_d\colon \R^n\times \R^n \times \al \to \R$ as
$l_d(x_k,x_{k+1},\sigma_k)=L_d(x_k,x_{k+1},\tau(\sigma_k))$. For example, for the discrete Lagrangian $L_d$ defined above,
\begin{multline*} l_d(x_k,x_{k+1},\sigma_k)=\frac{h}{2}\tilde{\mathcal{G}}_{ij}\left(\frac{x_{k+1}^i-x_k^i}{h}\right)\left(\frac{x_{k+1}^j-x_k^j}{h}\right)+
h\tilde{\mathcal{G}}_{it}\left(\frac{x_{k+1}^i-x_k^i}{h}\right)\frac{\sigma_k ^t}{h}\\
+\frac{h}{2}\tilde{\mathcal{G}}_{st}\frac{\sigma_k ^s}{h}\frac{\sigma_k ^t}{h}
-\frac{h}{2}\left( V(x_k)+V(x_{k+1}) \right).
\end{multline*}
Note that $\sigma_k/h$ plays the role of a velocity in the Lie algebra direction, so $\sigma_k$ represents a small change in the pose variables after time $h$. In this sense, $\sigma_k$ is analogous to the pair $(x_k,x_{k+1})$. One has
\[
D_3l_d(x_k,x_{k+1},\sigma_{k})=D_3L_d(x_k,x_{k+1},\tau(\sigma_{k}))\circ T_{\sigma_k}\tau.
\]
Using lemma \ref{lemaPr_again} and definition \ref{Retr_again} in the Appendix, one can compute
\begin{align*}
(\mbox{d}\tau^{-1}_{-\sigma_{k}})^*&D_3l_d(x_{k},x_{k+1},\sigma_{k})
=D_3L_d(x_k,x_{k+1},\tau(\sigma_{k}))\circ T_{\sigma_k}\tau\circ\mbox{d}\tau^{-1}_{-\sigma_{k}}=\\
&=D_3L_d(x_k,x_{k+1},\tau(\sigma_{k}))\circ T_{\sigma_k}\tau\circ \mbox{d}\tau^{-1}_{\sigma_{k}}\circ \Ad_{\tau(\sigma_k)}=\\
&=D_3L_d(x_k,x_{k+1},\tau(\sigma_{k}))\circ T_{\sigma_k}\tau\circ 
(T_{\sigma_k}\tau) ^{-1}\circ T_er_{\tau(\sigma_k)}
\circ \Ad_{\tau(\sigma_k)}=\\
&=D_3L_d(x_k,x_{k+1},\tau(\sigma_{k}))\circ T_e\ell_{\tau(\sigma_k)}
\end{align*}
and
\begin{align*}
(\mbox{d}\tau^{-1}_{\sigma_{k}})^*&D_3l_d(x_{k},x_{k+1},\sigma_{k})
=D_3L_d(x_k,x_{k+1},\tau(\sigma_{k}))\circ T_{\sigma_k}\tau\circ\mbox{d}\tau^{-1}_{\sigma_{k}}=\\
&=D_3L_d(x_k,x_{k+1},\tau(\sigma_{k}))\circ T_{\sigma_k}\tau\circ 
(T_{\sigma_k}\tau) ^{-1}\circ T_er_{\tau(\sigma_k)}\\
&=D_3L_d(x_k,x_{k+1},\tau(\sigma_{k}))\circ T_er_{\tau(\sigma_k)}.
\end{align*}
Therefore, setting $g_k=\tau(\sigma_k)$ and $\sigma_k=h\xi_k$, equation \eqref{Redb} becomes
\begin{equation}\label{Redd_2}
\begin{split}
(\mbox{d}\tau^{-1}_{-h\xi_{k-1}})^*D_3l_d(x_{k-1},x_{k},h\xi_{k-1})-(\mbox{d}\tau^{-1}_{h\xi_{k}})^*D_3l_d(x_k,x_{k+1},h\xi_{k})=\\ \lambda_{a,k}\,\eta_t^a(x_k)\,E^t.
\end{split}
\end{equation}
Generally speaking, in most applications one could bypass the definition of $L_d$ and choose $l_d$ to be defined by
\[
l_d(x_k,x_{k+1},\sigma_k)=hL\left(\frac{x_k+x_{k+1}}{2},\frac{x_{k+1}-x_k}{h},\frac{\sigma_k}{h}\right)
\]
or a similar formula.

As we know, \eqref{GNIredb} provides a discretization of the nonholonomic constraints on the Hamiltonian side:
\begin{multline}
A^{i,a}(x_k)\lp\tilde{\met}_{ij}\frac{(x_{k+1}^j-x_{k-1}^j)}{2h}+\frac{1}{2h}\tilde{\met}_{it}((\tau^{-1}(g_{k}))^t
+(\tau^{-1}(g_{k-1}))^t)\rp\\
+\frac{1}{2}B^{t,a}(x_k)\lp\ell^*_{g_{k-1}}D_3L_d(x_{k-1},x_k,g_{k-1})+r_{g_{k}}^*D_3L_d(x_{k},x_{k+1},g_{k})\rp_t=0,\label{Redc}
\end{multline}
or, equivalently,
\begin{multline*}
A^{i,a}(x_k)\lp\tilde{\met}_{ij}\frac{(x_{k+1}^j-x_{k-1}^j)}{2h}+\frac{1}{2}\tilde{\met}_{it}(\xi_{k}^t+\xi_{k-1}^t)\rp\\
+\frac{1}{2}B^{t,a}(x_k)\lp(\mbox{d}\tau^{-1}_{-h\xi_{k-1}})^*D_3l_d(x_{k-1},x_{k},h\xi_{k-1})+(\mbox{d}\tau^{-1}_{h\xi_{k}})^*D_3l_d(x_k,x_{k+1},h\xi_{k})\rp_t=0,
\end{multline*}
where
\begin{align*}
A^{i,a}(x_k)&=(\tilde{\met}^{-1})^{ij}\mu_j^a(x_k)+(\tilde{\met}^{-1})^{it}\eta_t^a(x_k),\\
B^{t,a}(x_k)&=(\tilde{\met}^{-1})^{ti}\mu_i^a(x_k)+(\tilde{\met}^{-1})^{ts}\eta_s^a(x_k),
\end{align*}
$(\tilde{\met}^{-1})$ being the inverse matrix of  $(\tilde{\met})=\left(\begin{array}{ll}
\tilde{\met}_{ij}&\tilde{\met}_{sj}\\\tilde{\met}_{it}&\tilde{\met}_{st}\end{array}\right)$.

Our aim in the following is to find an extension of the nonholonomic RATTLE algorithm presented in remark \ref{remark1} for systems defined on $T\R^n\times \al$. For that purpose we define $\tilde p_k,p_{k+1/2}\in T^*_{x_k}\R^n$ and $\tilde M_k,M_{k+1/2}\in \al^*$ by
\begin{align*}
(\tilde p_k)_i&=\tilde{\met}_{ij}\frac{(x_{k+1}^j-x_{k-1}^j)}{2h}+\frac{1}{2}\tilde{\met}_{is}(\xi_{k}^s+\xi_{k-1}^s),\\
(p_{k+1/2})_i&=\tilde{\met}_{ij}\frac{(x_{k+1}^j-x_k^j)}{h}+\tilde{\met}_{is}\xi_k^s,\\
\tilde M_k&=(\mbox{d}\tau_{h\xi_k}^{-1})^*D_3l_d(x_k,x_{k+1},h\xi_k),\\
M_{k+1/2}&=\Ad^*_{\tau(h\xi_k)}\tilde M_k-\frac{1}{2}\lambda_{a,k+1}\,\eta_s^a(x_{k+1})\,E^s,
\end{align*}
where $\tilde\lambda_{a,k}=\lambda_{a,k}/h$. We also recall that $\xi_k=\tau^{-1}(g_k)/h$. After these redefinitions, equations \eqref{Red}, \eqref{Redd_2} and \eqref{Redc} can be translated into the following algorithm
\begin{subequations}\label{RAT}
\begin{align}
p_{k+1/2}&=\tilde p_k-\frac{h}{2}\lp V_x(x_k)+\tilde\lambda_{a,k}\,\mu^a(x_k)\rp,\label{RATa}\\
M_{k+1/2}&=\Ad^*_{\tau(h\xi_k)}\tilde M_k-\frac{1}{2}\lambda_{a,k+1}\,\eta^a(x_{k+1}),\label{RATb}\\
x_{k+1}^i&=x_k^i+h(\tilde{\met}^{-1})^{ij}\lp(p_{k+1/2})_j-\tilde{\met}_{jt}\,\xi^t_{k}\rp,\label{RATc}\\
A^a(x_{k+1})\,\tilde p_{k+1}&+\frac{1}{2}B^a(x_{k+1})\lp\Ad^*_{\tau(h\xi_{k})}\,\tilde M_{k}+\tilde M_{k+1}\rp=0,\label{RATd}\\
\tilde p_{k+1}&=p_{k+1/2}-\frac{h}{2}\lp V_x(x_{k+1})+\tilde\lambda_{a,k+1}\,\mu^a(x_{k+1})\rp,\label{RATe}\\
\tilde M_{k+1}&=M_{k+1/2}-\frac{1}{2}\lambda_{a,k+1}\,\eta^a(x_{k+1}),\label{RATf}
\end{align}
\end{subequations}
with the natural definitions
$\eta^a(x_k)=\eta^a_t(x_k)\,E^t$, 
$\mu^a(x_k)=\mu^a_i\,dx^i$, 
$A^a(x_k)=A^{i,a}(x_k) \frac{\partial}{\partial x^i}$,
$B^a(x_k)=B^{t,a}(x_k) E_t$; moreover, most of the equations are written in matrix form.

Next, we present the following sequence in order to obtain the 1-step values $(x_{k+1},\tilde p_{k+1},\xi_{k+1},\tilde M_{k+1},\tilde\lambda_{a,k+1})$ from the original values $(x_k,\tilde p_k,\xi_k,\tilde M_k,\tilde\lambda_{a,k})$. First, it is clear that $p_{k+1/2}$ is directly obtained from \eqref{RATa}. Once $p_{k+1/2}$ is fixed, the same happens in \eqref{RATc} determining $x_{k+1}$. Moreover, introducing \eqref{RATb} into \eqref{RATf} we obtain the system of equations
\begin{align*}
0&=A^a(x_{k+1})\,\tilde p_{k+1}+\frac{1}{2}B^a(x_{k+1})\lp\Ad^*_{\tau(h\xi_{k})}\,\tilde M_{k}+\tilde M_{k+1}\rp,\\
\tilde p_{k+1}&=p_{k+1/2}-\frac{h}{2}\lp V_x(x_{k+1})+\tilde\lambda_{a,k+1}\,\mu^a(x_{k+1})\rp,\\
\tilde M_{k+1}&=\Ad^*_{\tau(h\xi_k)}\tilde M_k-\lambda_{a,k+1}\,\eta^a(x_{k+1}),
\end{align*}
which implicitly provides $(\tilde p_{k+1},\tilde M_{k+1},\tilde\lambda_{a,k+1})$. Therefore, we see that equations \eqref{RAT} do not give the value $\xi_{k+1}$ directly. Nevertheless, replacing \eqref{RATa} into \eqref{RATc} and taking a step forward, we obtain the equation
\[
x_{k+2}=x_{k+1}+h(\tilde{\met}^{-1})\lp\tilde p_{k+1}-\frac{h}{2}\lp V_x(x_{k+1})+\tilde\lambda_{a,k+1}\,\mu^a(x_{k+1})\rp-\tilde{\met}\,\xi_{k+1}\rp,
\]
which determines $x_{k+2}$ in terms of $x_{k+1},\tilde p_{k+1},\tilde\lambda_{a,k+1}$ (already fixed by the previous sequence) and $\xi_{k+1}$. Finally, introducing this value of $x_{k+2}$ into the definition of $\tilde M_{k+1}$ we obtain the equation
\[
\tilde M_{k+1}=(\mbox{d}\tau_{h\xi_{k+1}}^{-1})^*D_3l_d(x_{k+1},x_{k+2},h\xi_{k+1}),
\]
which implicitly determines $\xi_{k+1}$ since $\tilde M_{k+1}$ has been previously determined. Note that this last step is not incompatible with equations \eqref{RAT} since the chosen value of $x_{k+2}$, and also $\tilde M_{k+1}$'s, is precisely the one that the algoritm provides. Schematically, the proposed algorithm can be represented by
\begin{equation*}
(x_k,\tilde p_k,\xi_k,\tilde M_k,\tilde\lambda_{a,k})\Flder(p_{k+1/2},x_{k+1})\Flder(\tilde p_{k+1},\tilde M_{k+1},\tilde\lambda_{a,k+1})\Flder \xi_{k+1}.
\end{equation*}

\begin{remark}
{\rm
A natural question related to the reduction of continuous or discrete mechanical systems with symmetry concerns the reverse procedure. Once the solutions of the reduced system have been obtained, how can we recover from them the solutions of the unreduced system? Observe that, in our case, we have only considered the case of trivial principal bundles 
$\hbox{pr}_1: M\times G\to M$ with trivial action $\Phi_{\tilde{W}}(x, W)=(x, \tilde{W}W)$ where $x\in M$ and $W, \tilde{W}\in G$.
The original mechanical Lagrangian is defined by $L: T(M\times G)\equiv TM\times TG\rightarrow {\mathbb R}$ along with the  nonholonomic distribution ${\mathcal D}$.
The reduced system $(\tilde{L}, \tilde{\mathcal D})$ is defined on $TM\times {\mathfrak g}$ and, given a reduced solution of the nonholonomic system  $(x(t), \xi(t))$, we cane obtain the solution of the original system by solving additionally the equation $\dot{W}(t)=W(t)\xi(t)$, which is called the {\it reconstruction equation}. 
In the discrete case we have a similar scheme. Namely, a reduced solution is a sequence $(x_k, x_{k+1}, g_k)$ and the discrete solutions  $(x_k, x_{k+1}, W_k, W_{k+1})$ of the unreduced system 
are derived by the discrete reconstruction equation $W_{k+1}=W_k g_k$. Moreover, if we describe our reduced integrator using a retraction map $\tau: {\mathfrak g}\rightarrow G$, then 
the reconstruction equation reads $W_{k+1}=W_k\tau(h\xi_k)$. 
}
\end{remark}

\subsection{The case of affine constraints}\label{aff-red}

We consider in this section the extension of the reduced GNI method for the case of affine nonholonomic constraints. With the same notation as in section \ref{wert}, take an affine bundle $\tilde{\mathcal A}$ of $TM\times \al $ modeled on the vector bundle $\tilde{\mathcal D}$ and  assume that there exists a globally defined section $\tilde{Y}\in \Gamma(TM\times \al)$ such that $v_x\in \tilde{{\mathcal A}}_x$ if and only if
 $v_x-\tilde{Y}(x)\in \tilde{{\mathcal D}}_x$.

 Fixing a local basis of sections $\{e_I\}=\{\tilde{e}_\alpha, \tilde{e}_a\}$ of $\Gamma(TM\times \al)$ adapted to the orthogonal decomposition $\tilde{{\mathcal D}}\oplus \tilde{{\mathcal D}}^\perp$, the constraints determining locally the affine subbundle $\tilde{\mathcal A}$ are
 \[
 y^a-Y^a(x)=0
 \]
 where $\tilde{Y}=Y^{\alpha} \tilde{e}_{\alpha}+Y^a\tilde{e}_a$.

In our case, the initial data defining our {\it reduced nonholonomic affine problem} is denoted by the 4-tuple $(\tilde{\mathcal{D}}, \tilde{\mathcal G}, \tilde{Y}, \tilde{V})$ (see section \ref{wert}). By means of the metric, from $\tilde{Y}$, we can uniquely define a 1-section  $\tilde{\mathcal G}(\tilde Y,\cdot)=\Pi\in\Gamma(T^*M\times \al^*)$. Locally, $\Pi=\tilde{\mathcal G}_{IJ}Y^Je^I$.

Consider a discrete Lagrangian $L_d:Q\times Q\times G\Flder\R$. 
As in the previous sections,  we can split the total space $E$ as $E=\tilde\dist\oplus\tilde\dist^{\perp}$ with  corresponding projectors $\pe:E\Flder\tilde\dist$, $\qu:E\Flder\tilde\dist^{\perp}$. 
Thus, the proposed {\bf reduced GNI equations} for affine constraints are a mixture of definitions \ref{GNIAff} and \ref{GNIRed}, namely
\begin{subequations}\label{GNIred-1}
\begin{align}
&\mathcal{P}_{x_k}^*\lp \F L_d^-(x_k,x_{k+1},g_k)\rp-\mathcal{P}_{x_{k}}^*\lp \F L_d^+(x_{k-1},x_{k},g_{k-1})\rp=0,\label{GNIreda-aff}\\
&\mathcal{Q}_{x_k}^*\lp \F L_d^-(x_k,x_{k+1},g_k)\rp+\mathcal{Q}_{x_{k}}^*\lp \F L_d^+(x_{k-1},x_{k},g_{k-1})\rp+2\mathcal{Q}_{x_{k}}^*\Pi=0,\label{GNIredb-aff}
\end{align}
\end{subequations}
where the Legendre transforms $\F L_d^{\pm}$ are defined in \eqref{RedLegTr}.

\subsection{Example: the rolling ball}

Consider the motion of an inhomogeneous sphere of radius $r>0$ that rolls without slipping
on a table. We will consider two cases: when the table is fixed or when it is rotating with constant angular velocity $\Omega$ around a vertical axis. The first one corresponds to linear nonholonomic constraints  while the second to affine ones.
 
If the center of mass of the sphere coincides with the geometric center, we recover
the well-known problem of the Chaplygin sphere, which possesses an invariant measure. The general case is known  as the Chaplygin top and its qualitative behaviour is quite different depending on the cases exposed. For instance, it is known that  the Chaplygin top has an invariant measure if and only if:
(i) The center of mass of the sphere coincides with the geometric center or (ii) The ball is axially symmetric (see  \cite{BM1,BMB,BMK,FGM} and references therein).


The configuration space for the continuous system is $Q=\R^2\times\gru$ and we shall use the notation $(x,y;R)$ to represent a typical point in $Q$. In consequence, according to the previous subsection, $E=T\R^2\times\alg$. It is well-known that there exists an isomorphism $\hat\cdot:\R^3\Flder\alg$ given by
\begin{equation}\label{hatso}
\hat\omega=\lp\begin{array}{ccc}
0&-\omega_{3}&\omega_{2}\\
\omega_{3}&0&-\omega_{1}\\
-\omega_{2}&\omega_{1}&0
\end{array}
\rp\in\alg,
\end{equation}
where, obviously, $\omega=(\omega_1,\omega_2,\omega_3)^T\in\R^3$. Given $\dot x\,\frac{\der}{\der x}+\dot y\,\frac{\der}{\der y}\in T\R^2$ and $\hat\omega\in\alg$, the nonholonomic constraints read
\begin{subequations}\label{Con}
\begin{align}
\dot x-r\,\omega_2+\Omega\,y&=0,\label{Cona}\\
\dot y+r\,\omega_1-\Omega\,x&=0.\label{Conb}
\end{align}
\end{subequations}
It is clear that the constraints above do not correspond to the linear case but to the affine one. We will apply the procedure developed in section \ref{aff-red}. Hence, the nonholonomic setting in this example corresponds to a mixture of the settings presented in sections \ref{AffS} and \ref{RS}. Let us define a global basis of sections of $T\R^2\times\alg\Flder\R^2$:
\begin{align*}
&\tilde e_1=\left(\frac{\der}{\der x},0\right)\,,\,\tilde e_2=\left(\frac{\der}{\der y},0\right),\\
&\tilde e_3=(0,E_1)\,,\,\tilde e_4=(0,E_2)\,,\,\tilde e_5=(0,E_3),
\end{align*}
where $\lc E_1,E_2,E_3\rc$ is the basis of $\alg$ obtained from the standard basis of $\R^3$ via the isomorphism $\hat\cdot$. Therefore, the distribution generated by the constraints \eqref{Con} may be written in this basis as
\[
\tilde{\dist}=\spanop\lc r\tilde e_1+\tilde e_4\,,\,-r\tilde e_2+\tilde e_3\,,\,\tilde e_5\rc,
\]
while the vector field $\tilde Y$ is
\[
\tilde{Y}=-\Omega\,y\,\frac{\der}{\der x}+\Omega\,x\,\frac{\der}{\der y}\in\mathfrak{X}(\R^2).
\]
Moreover, the reduced Lagrangian function $l:T\R^2\times\alg\Flder\R$ is given by the kinetic energy, i.e.,
\begin{equation}\label{LagCont}
l(x,y,\dot x,\dot y;\omega)=\frac{1}{2}m(\dot x^2+\dot y^2)+\frac{1}{2}\,(I_1\,\omega_1^2+I_2\,\omega_2^2+I_3\,\omega_3^2),
\end{equation}
where $\dot q=(x,y;\dot x,\dot y)^T\in T\R^2$ and $\omega\in\alg$ (where, as mentioned before, we are employing the isomorphism $\hat\cdot:\R^3\Flder\alg$), which determines the metric
\[
\tilde{\met}=m_{ij}(\tilde e^i\otimes\tilde e^j)+I_{ts}(\tilde e^t\otimes\tilde e^s),
\]
where $i,j=1,2$; $t,s=3,4,5$; $(m_{ij})=\mbox{diag}(m,m)$ and $(I_{ts})=\mbox{diag}(I_1,I_2,I_3)$. With respect to this metric, the orthogonal complement to $\tilde\dist$ is
\[
\tilde\dist^{\perp}=\spanop\lc\tilde e_1-\frac{mr}{I_2}\tilde e_4\,,\,\tilde e_2+ \frac{mr}{I_1}\tilde e_3\rc.
\]
As mentioned above, the example of the rolling ball fits in an affine nonholonomic scheme for the reduced system $T\R^2\times\alg$. Define the discrete reduced Legendre transformations
\[
\F l_d^{\pm}:\R^2\times\R^2\times\alg\Flder T^*\R^2\times\dalg,
\]
as
\begin{small}
\begin{eqnarray*}
\F l_d^-\colon \R^2\times \R^2\times\alg&\to & T^*\R^2\times\dalg\\
(q_k, q_{k+1},\omega_k)&\longmapsto& (q_k, -D_1 l_d(q_k, q_{k+1},\omega_k),(\mbox{d}\tau^{-1}_{h\omega_k})^*D_3l_d(q_k, q_{k+1},\omega_k)),\\
\F l_d^+\colon  \R^2\times \R^2\times \alg&\to&  T^*\R^2\times\dalg\\
(q_k, q_{k+1},\omega_k)&\longmapsto& (q_{k+1}, D_2 l_d(q_k, q_{k+1},\omega_k),(\mbox{d}\tau^{-1}_{-h\omega_k})^*D_3l_d(q_k, q_{k+1},\omega_k))\;,
\end{eqnarray*}
\end{small}%
where the relationship between $\F L_d^{\pm}$ and $\F l_d^{\pm}$ is given by the properties of the retraction map $\tau$ presented in Appendix (see \cite{JKM} for more details).
The proposed nonholonomic equations \eqref{GNIred} become
\begin{subequations}\label{RB}
\begin{align}
\mathcal{P}_{q_k}^*\lp \F l_d^-(q_k,q_{k+1},\omega_k)\rp&-\mathcal{P}_{q_{k}}^*\lp \F l_d^+(q_{k-1},q_{k},\omega_{k-1})\rp=0,\label{RBa}\\
\mathcal{Q}_{q_k}^*\lp \F l_d^-(q_k,q_{k+1},\omega_k)\rp&+\mathcal{Q}_{q_{k}}^*\lp \F l_d^+(q_{k-1},q_{k},\omega_{k-1})\rp+2\mathcal{Q}_{q_k}^*\Pi=0,\label{RBb}
\end{align}
\end{subequations}
where $q_k=(x_k,y_k)^T$, $\omega_k\in\alg$ and $\Pi=\tilde\met(\tilde Y,\cdot)$, which in this case reads
\[
\Pi=-m\Omega y\,\tilde e^1+m\Omega x\,\tilde e^2.
\] 
We choose the discrete Lagrangian $l_d:\R^2\times\R^2\times\alg\Flder\R$ as $l_d(q_k,q_{k+1},\omega_k)=hl(q_k,\frac{q_{k+1}-q_k}{h},\omega_k)$, namely
\begin{small}
\begin{equation}\label{LagDisc}
l_d(q_k,q_{k+1},\omega_k)=\frac{m}{2h}\lp(x_{k+1}-x_k)^2+(y_{k+1}-y_k)^2\rp+\frac{h}{2}\lp I_1(\omega_k)_1^2+I_2(\omega_k)_2^2+I_3(\omega_k)_3^2\rp.
\end{equation}
\end{small}
Finally, the projection $\qu:T\R^2\times\alg\Flder\tilde\dist^{\perp}$ is given in coordinates by the matrix
\begin{equation}\label{proQ}
\qu=\lp\begin{array}{ccccc}
\frac{I_2}{mr^2+I_2}&0&0&\frac{-r\,I_2}{mr^2+I_2}&0\\
0&\frac{I_1}{mr^2+I_1}&\frac{r\,I_1}{mr^2+I_1}&0&0\\
0&\frac{mr}{mr^2+I_1}&\frac{mr^2}{mr^2+I_1}&0&0\\
\frac{-mr}{mr^2+I_2}&0&0&\frac{mr^2}{mr^2+I_2}&0\\
0&0&0&0&0
\end{array}\rp,
\end{equation}
while $\mathcal{P}:T\R^2\times\alg\Flder\tilde\dist$ is given by
\begin{equation}\label{proP}
\mathcal{P}=\lp\begin{array}{ccccc}
\frac{mr^2}{mr^2+I_2}&0&0&\frac{r\,I_2}{mr^2+I_2}&0\\
0&\frac{mr^2}{mr^2+I_1}&\frac{-r\,I_1}{mr^2+I_1}&0&0\\
0&\frac{-mr}{mr^2+I_1}&\frac{I_1}{mr^2+I_1}&0&0\\
\frac{mr}{mr^2+I_2}&0&0&\frac{I_2}{mr^2+I_2}&0\\
0&0&0&0&1
\end{array}\rp,
\end{equation}
Setting the retraction map $\tau$ as the Cayley map for $SO(3)$, that is $\tau(\omega)=\mbox{cay}(\omega)$ (see Appendix for more details) and taking into account \eqref{LagDisc}, \eqref{proQ}, \eqref{proP}; then equations \eqref{RB} read
\begin{align*}
mr\lp\frac{x_{k+1}-2x_k+x_{k-1}}{h}\rp+I_2\lp\omega^k_2-\omega^{k-1}_2\rp+\frac{h}{2}(I_1-I_3)\lp\omega^k_1\omega^k_3+\omega^{k-1}_1\omega^{k-1}_3\rp+O_1(h^2)&=0,\\\\
mr\lp\frac{y_{k+1}-2y_k+y_{k-1}}{h}\rp-I_1\lp\omega^k_1-\omega^{k-1}_1\rp-\frac{h}{2}(I_3-I_2)\lp\omega^k_2\omega^k_3+\omega^{k-1}_2\omega^{k-1}_3\rp+O_2(h^2)&=0,\\\\
I_3\lp\omega^k_3-\omega^{k-1}_3\rp+\frac{h}{2}(I_2-I_1)\lp\omega^k_1\omega^k_2+\omega^{k-1}_1\omega^{k-1}_2\rp+O_3(h^2)&=0,\\\\
\frac{x_{k+1}-x_{k-1}}{2h}+\Omega\,y_k-r\frac{\omega^k_2+\omega^{k-1}_2}{2}-r\frac{h}{4}\frac{I_1-I_3}{I_2}\lp\omega^k_1\omega^k_3-\omega^{k-1}_1\omega^{k-1}_3\rp+O_4(h^2)&=0,\\\\
\frac{y_{k+1}-y_{k-1}}{2h}-\Omega\,x_k+r\frac{\omega^k_1+\omega^{k-1}_1}{2}+r\frac{h}{4}\frac{I_3-I_2}{I_1}\lp\omega^k_2\omega^k_3-\omega^{k-1}_2\omega^{k-1}_3\rp+O_5(h^2)&=0,
\end{align*}
where
\begin{eqnarray*}
&&O_1(h^2)=\frac{h^2}{4}\lc \omega_2^k\,||\omega^k||_I^2-\omega_2^{k-1}\,||\omega^{k-1}||^2_I\rc,\\\\
&&O_2(h^2)=-\frac{h^2}{4}\lc \omega_1^k\,||\omega^k||^2_I-\omega_1^{k-1}\,||\omega^{k-1}||^2_I\rc,\\\\
&&O_3(h^2)=\frac{h^2}{4}\lc \omega_3^k\,||\omega^k||^2_I-\omega_3^{k-1}\,||\omega^{k-1}||^2_I\rc,\\\\
&&O_4(h^2)=-\frac{h^2}{8\,I_2}r\lc \omega_2^k\,||\omega^k||^2_I+\omega_2^{k-1}\,||\omega^{k-1}||^2_I\rc,\\\\
&&O_5(h^2)=\frac{h^2}{8\,I_1}r\lc \omega_1^k\,||\omega^k||^2_I+\omega_1^{k-1}\,||\omega^{k-1}||^2_I\rc,
\end{eqnarray*}
where $||\omega^k||_I^2=I_1\,(\omega_1^k)^2+I_2\,(\omega_2^k)^2+I_3\,(\omega_3^k)^2$ (equivalently in the case $k-1$). In these equations we recognize an order-one consistent discrete scheme for the continuous equations of the rolling ball system. This fact is not surprising since the discrete Lagrangian \eqref{LagDisc} is as well an order-one approximation of the action integral defined by the continuous Lagrangian \eqref{LagCont} (see \cite{MarsdenWest,PatCuel} for more details regarding the relationship between the order of consistency of the discrete Lagrangian with respect to the action integral and of the variational integrators obtained from them).

In Figure \ref{simulation-homogeneous} we show the numerical results of applying this discrete method. As a first example, we consider a homogeneous ball with $I_1=I_2=I_3=2/3$, and $m=r=\Omega=1$. We take decreasing values of the time step $h$, and compare to the method in \cite{IMMM}. We show errors with respect to the exact solution to the continuous system, with initial conditions $(x_0,y_0, \dot x_0, \dot y_0)=(1,1,1,1)$, $\omega=(0,2,0)$, and a total run time of 10. Figure \ref{trajectory-homogeneous} shows the evolution of the $x_k$, $y_k$ variables for these same physical parameters and initial conditions, for a total run time of 1000.

\begin{figure}[ht]
\begin{center}
\includegraphics[trim = 25mm 0mm 25mm 0mm, clip, scale=.45]{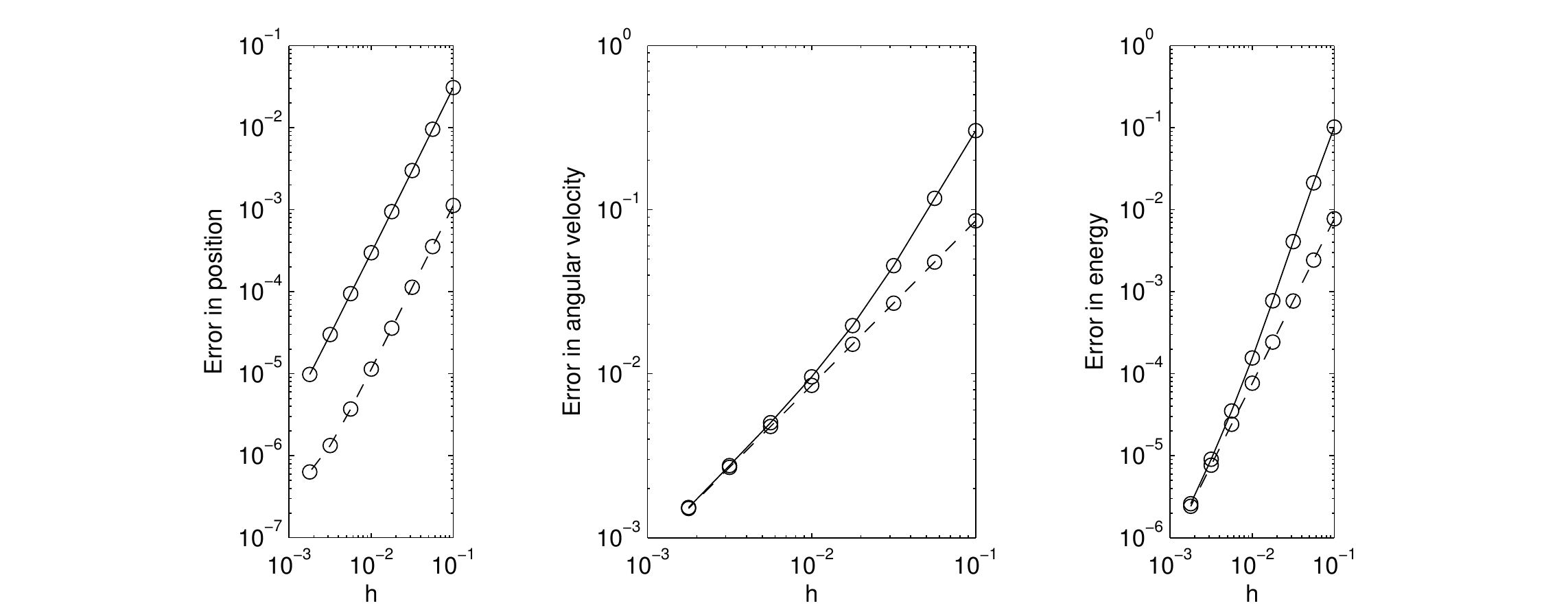}
\end{center}
\caption{Errors in position $(x,y)$, angular velocity $\omega$ and energy. The continuous line corresponds to the proposed method. The dashed line corresponds to the method in \cite{IMMM}.}
\label{simulation-homogeneous}
\end{figure}

\begin{figure}[ht]
\begin{center}
\includegraphics[trim = 25mm 5mm 25mm 0mm, clip, scale=.45]{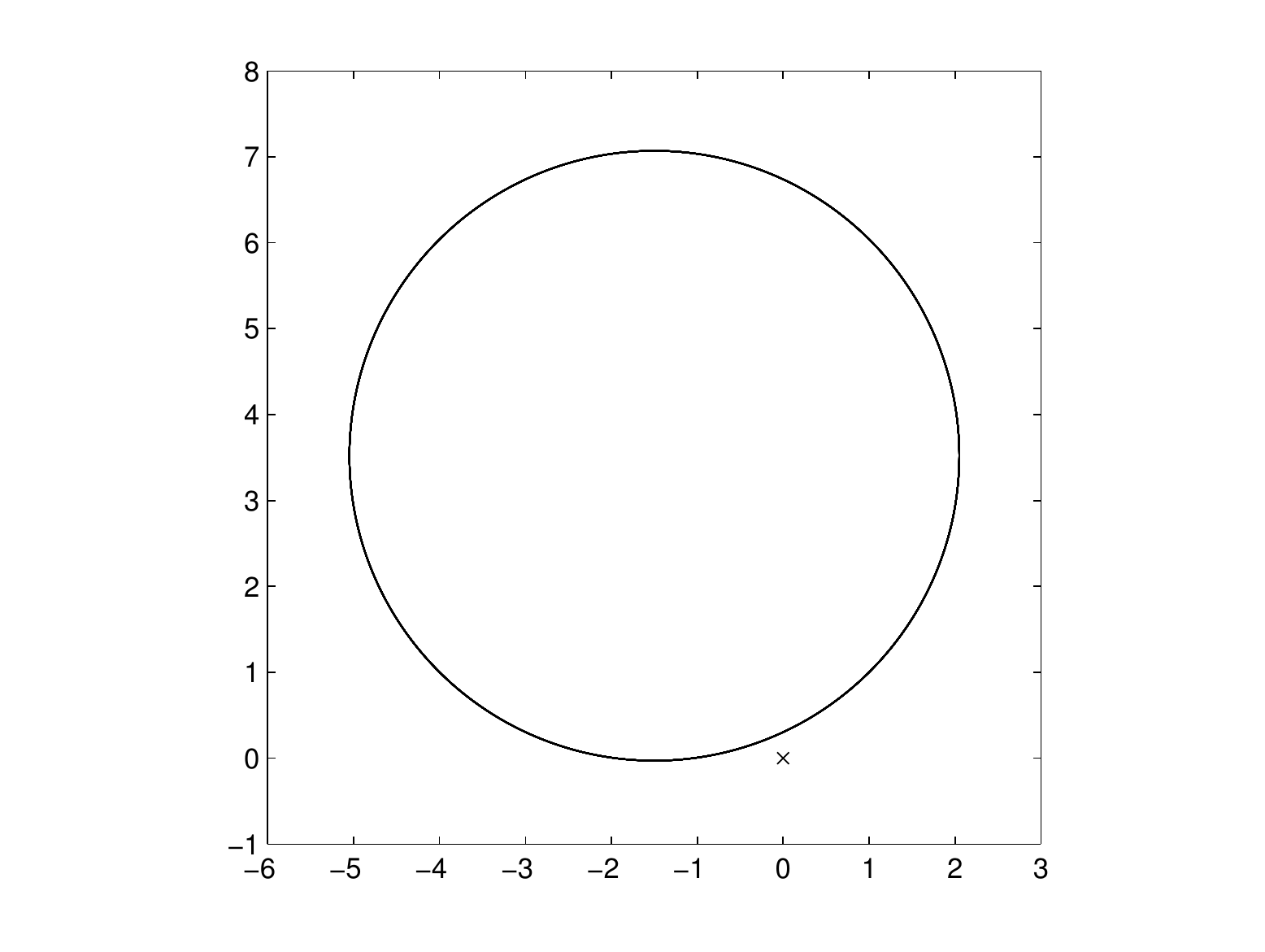}
\end{center}
\caption{Position $(x,y)$ for the point of contact of the homogeneous ball, with $h=0.1$ and 10000 steps. The cross indicates the center of the rotating plate.}
\label{trajectory-homogeneous}
\end{figure}

Our second, more general example uses $I_1=1$, $I_2=1.1$, $I_3=1.2$, $m=3$, $r=1$, $\Omega=0.2$. The initial conditions for the discrete algorithm are $(x_0,y_0)=(1,0)$, $\omega^0=(-0.2, 0, 0.4)$, with $(x_1,y_1)$ obtained from a two-point version of the last two equations of the discrete method above, which represent the constraint equations. The simulation was run up to a final time of 15, with $h$ decreasing exponentially from 0.15 to around $8.4 \times 10^{-5}$. A discrete trajectory with a smaller value of $h\approx 2.7\times 10^{-5}$ was used as a reference solution against which we computed errors in position and velocity in order to visualize the behavior of the method regarding convergence (Figure \ref{simulation-convergence-example2}).

\begin{figure}[ht]
\begin{center}
\includegraphics[trim = 33mm 0mm 43mm 0mm, clip, scale=.44]{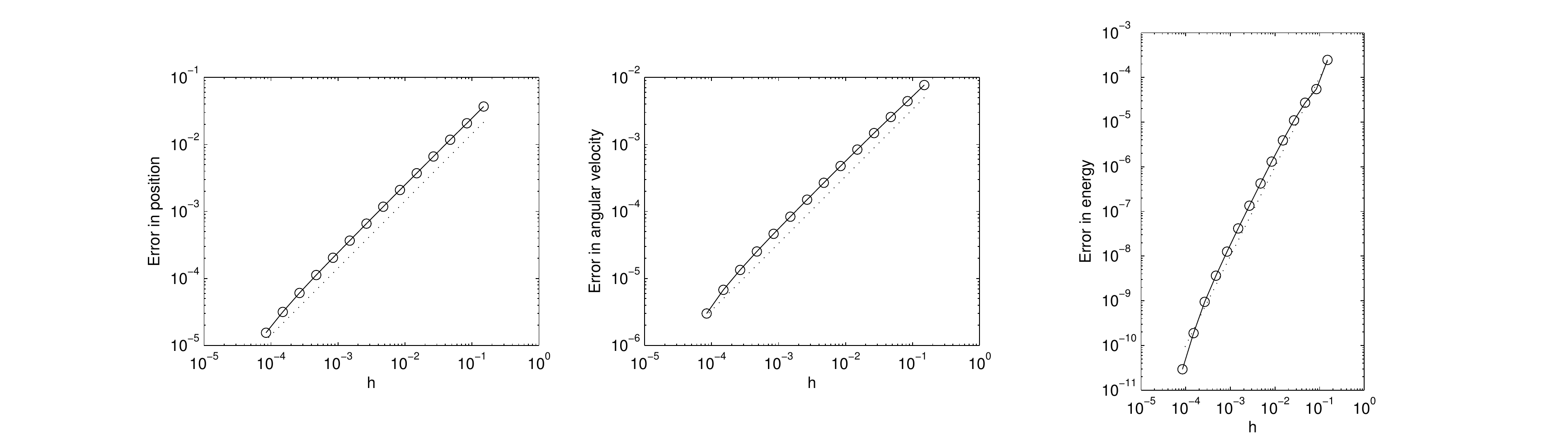}
\end{center}
\caption{Errors in position, angular velocity and energy. Dotted lines are multiples of $h$, $h$ and $h^2$ respectively, for comparison of slopes.}
\label{simulation-convergence-example2}
\end{figure}

\section{Extension to Lie algebroids}\label{LieAlg}

\subsection{Brief introduction to Lie groupoids and Lie algebroids}

\begin{definition}
A \textbf{Lie groupoid}, denoted $G\rightrightarrows Q$, consists of two differentiable manifolds  $G$ and $Q$ and the
  following differentiable maps (the structural maps):
\begin{enumerate}
\item A pair of submersions: the  \textbf{source map} $ \alpha \colon G \rightarrow Q $ and the \textbf{target map} $
  \beta \colon G \rightarrow Q $.
\item An associative \textbf{multiplication map} $ m \colon G _2 \rightarrow G
  $, $ (g,h) \mapsto gh $, where the set
\[
  G _2 = \left\{ \left( g , h
    \right) \in G \times G \;\middle\vert\; \beta (g) = \alpha (h)
  \right\}
\]
is called the set of \textbf{composable pairs}.
\item An \textbf{identity section} $ \epsilon \colon Q \rightarrow G $ of $\alpha$ and $\beta$, such
  that for all $ g \in G $,
\[
  \epsilon \left( \alpha (g) \right) g = g = g \epsilon \left( \beta
    (g) \right).
\]
\item An \textbf{inversion map} $ i \colon G \rightarrow G $, $ g \mapsto
  g ^{-1} $, such that for all $ g \in G $,
\[
  g g ^{-1} = \epsilon \left( \alpha (g) \right) , \qquad g ^{-1} g =
  \epsilon \left( \beta (g) \right) .
\]
\end{enumerate}
\end{definition}

Next, we will introduce the notion of a left (right) translation
by an element of a Lie groupoid. Given a groupoid $ G \rightrightarrows Q $ and an element $ g \in G
  $, define the \textbf{left translation} $ \ell _g \colon \alpha ^{-1}
  \left( \beta (g) \right) \rightarrow \alpha ^{-1} \left( \alpha (g)
  \right) $ and \textbf{right translation} $ r _g \colon \beta ^{-1}
  \left( \alpha (g) \right) \rightarrow \beta ^{-1} \left( \beta (g)
  \right) $ by $g$ to be
\[
  \ell _g (h) = g h , \qquad r _g (h) = h g .
\]

Analogously to the case of Lie groups, one may introduce the notion of left (right)-invariant vector field in a Lie groupoid from these translations.
Given a Lie groupoid $ G \rightrightarrows Q $, a vector field $ \xi
  \in \mathfrak{X} (G) $ is \textbf{left-invariant} if $\xi$ is $\alpha$-vertical
  and $ \left( T _h \ell_g \right) \left( \xi (h) \right) = \xi \left( g h \right) $ for
  all $(g, h) \in G _2 $.  Similarly, $\xi$ is \textbf{right-invariant}
  if $\xi$ is $\beta$-vertical and
  $ \left( T _h r _g \right) \left(  \xi (h) \right) = \xi \left( h g
  \right) $ for all $ (h, g) \in G _2 $.

It is well known that there always exists a Lie algebroid associated to a Lie groupoid (again analogously to the Lie group case).
We consider the vector bundle $\tau_{_{AG}}: AG \to Q$, whose fiber
at a point $x \in Q$ is $(AG)_{x} = V_{\epsilon(x)}\alpha
=\mbox{ker}\,(T_{\epsilon(x)}\alpha)$. It is easy to prove that there
exists a bijection between the space $\Gamma(\tau)$ and the set of
left (right)-invariant vector fields on $G$.
If $X$ is a section of $\tau_{_{AG}}: AG \to Q$, the corresponding
left (right)-invariant vector field on $G$
will be denoted $\lvec{X}$ (respectively, $\rvec{X}$), where
\begin{align*}
\lvec{X}(g) &= (T_{\epsilon(\beta(g))}\ell_{g})(X(\beta(g))),\\
\rvec{X}(g) &= -(T_{\epsilon(\alpha(g))}r_{g})((T_{\epsilon
(\alpha(g))}i)( X(\alpha(g)))),
\end{align*}
for $g \in G$. Using the above facts, we may introduce a {\bf Lie
algebroid structure} $(\lcf\cdot , \cdot\rcf, \rho)$ on
$AG$, which is defined by
\[
\lvec{\lcf X, Y\rcf} = [\lvec{X}, \lvec{Y}], \makebox[.3cm]{}
\rho(X)(x) = (T_{\epsilon(x)}\beta)(X(x)),
\]
for $X, Y \in \Gamma(\tau)$ and $x \in Q$. Note that
\[
\rvec{\lcf X, Y\rcf} = -[\rvec{X}, \rvec{Y}], \makebox[.3cm]{}
[\rvec{X}, \lvec{Y}] = 0,
\]
(for more details, see \cite{Mac}).

\subsection{GNI extension to Lie groupoids}Let $G\rightrightarrows Q$ be a Lie groupoid and $\tau_{_{AG}}: AG \to Q$ its associated Lie algebroid.
Consider a mechanical system subjected to linear nonholonomic constraints, that is, a pair $(L,\mathcal{D})$ (see \cite{MMM,IMMM} for more details), where
\begin{itemize}
\item[$i)$] $L:AG\Flder\R$ is a Lagrangian function of mechanical type
\[
L(a)=\frac{1}{2}\,\mathcal{G}(a,a)-V(\tau_{_{AG}}(a)),\,\,\,\mbox{where}\,a\in AG.
\]
\item[$ii)$] $\mathcal{D}$ is the total space of a vector subbundle $\tau_{\mathcal{D}}:\mathcal{D}\Flder Q$ of $AG$.
\end{itemize}
Here $\mathcal{G}:AG\times_{Q}AG\Flder\R$ is a bundle metric on $AG$. We also consider the orthogonal decomposition $AG=\mathcal{D}\oplus\mathcal{D}^{\perp}$ and the associated projectors 
\begin{equation}\label{AlgPro}
\mathcal{P}:AG\Flder\mathcal{D}\quad\quad\quad \mbox{and}\quad\quad\quad \mathcal{Q}:AG\Flder\mathcal{D}^{\perp}.
\end{equation}
Consider a discretization $L_d:G\Flder\R$ of the Lagrangian $L$. It is possible to define two Legendre transformations $\F L_d^{\pm}\colon G\to A^*G$ by
\begin{equation}\label{AlgLegTr}
\begin{split}
  \F L_d^-(h)(v_{\epsilon(\alpha(h))})&=-v_{\epsilon(\alpha(h))}(L_d\circ r_h \circ i),\\
  \F L_d^+(g)(v_{\epsilon(\beta(g))})&=v_{\epsilon(\beta(g))}(L_d\circ \ell_g),
\end{split}
\end{equation}
where $v_{\epsilon(\alpha(h))}\in A_{\alpha(h)}G$ and $v_{\epsilon(\beta(g))}\in A_{\beta(g)}G$. Therefore $\F L_d^-(h)\in A^*_{\alpha(h)}G$ and $\F L_d^+(g)\in A^*_{\beta(g)}G$. Since the Euler-Lagrange equations are given by the matching of momenta, in the Lie groupoid setting they read
\[ \F L_d^-(h)=\F L_d^+(g), \]
where $(g,h)$ is in the set $G_2$.

\begin{definition}\label{GNIAlg}
Consider the projectors \eqref{AlgPro} and the discrete Legendre transforms $\F L_d^{\pm}$ \eqref{AlgLegTr}. The extension of the GNI method for Lie algebroids is defined by the equations
\begin{subequations}\label{GNIgrup}
\begin{align}
\mathcal{P}^*_q\lp \F L_d^-(h)-\F L_d^+(g) \rp&=0\\
\mathcal{Q}^*_q\lp \F L_d^-(h)+\F L_d^+(g) \rp&=0,
\end{align}
\end{subequations}
where the subscript $q$ emphasizes the fact that the projections take place in the fiber over $q=\alpha(h)=\beta(g)$. 
\end{definition}
Let $\lc X_{\alpha},X_a\rc$ be a local basis adapted to $\mathcal{D}\oplus\mathcal{D}^{\perp}$, in the sense that locally $\mathcal{D}=\mbox{span}\lc X_{\alpha}\rc$ and $\mathcal{D}^{\perp}=\mbox{span}\lc X_{a}\rc$. We can rewrite equations \eqref{GNIgrup} as
\begin{subequations}\label{GNIgrupLoc}
\begin{align}
 \F L_d^-(h)\lp X_{\alpha}(q)\rp-\F L_d^+(g)\lp X_{\alpha}(q)\rp &=0,\label{GNIgrupLoca}\\
 \F L_d^-(h)\lp X_{a}(q)\rp+\F L_d^+(g)\lp X_{a}(q)\rp&=0\label{GNIgrupLocb},
\end{align}
\end{subequations}
where $\alpha(h)=\beta(g)=q\in Q$ (so $(g,h)\in G_2$). Let us denote
\begin{align*}
p_g^+&=\F L_d^+(g)\in A^*_qG,\\
p_h^-&=\F L_d^-(h)\in A^*_qG,
\end{align*}
so equation \eqref{GNIgrupLocb} becomes
\[
\lp\frac{p_g^++p_h^-}{2}\rp\lp X_a(q)\rp=0.
\]
If $\mu^a\in\Gamma(A^*G)$ are such that $\mathcal{D}^{\circ}=\mbox{span}\lc\mu^a\rc$, then this last equation becomes
\[
\mathcal{G}\lp\frac{p_g^++p_h^-}{2},\mu^a\rp=0,
\]
where, by a slight abuse of notation, we denote the bundle metric on $A^*G$ naturally induced by the bundle metric on $AG$ using the same symbol $\mathcal{G}$. Note that the set of $\eta\in A^*G$ such that $\mathcal{G}(\eta,\mu^a)=0$ for all $a$ forms the constraint submanifold $\mathcal{\bar D}=\mbox{Leg}_{\mathcal{G}}(\mathcal{D})$. Therefore the average momentum $\tilde p=(p_g^++p_h^-)/2\in\mathcal{\tilde D}$ satisfies in this sense the constraint equations.

\section{Conclusions}
In this paper,  we continue the study of the properties of the  Geometric Nonholonomic Integrator (GNI) and extending the construction given in our previous work \cite{SDD1} to a more extense class of nonholonomic systems (reduced systems and systems with affine constraints). Our paper shows the importance of combining different research areas
(differential geometry, numerical analysis and mechanics) to produce
methods with an extraordinary qualitative and quantitative behavior.

Such issues raise a number of future work directions. We therefore
close with some open questions and future work:
\begin{itemize}
\item Given a Geometric Nonholonomic Integrator,  does
  there exist, in the sense of backward error analysis, a continuous
  nonholonomic system, such that the discrete evolution for the
  nonholonomic integrator is the  flow of this  nonholonomic system up
  to an appropriate order?
\item  Is it possible to use the GNI in order to design numerical methods for optimal control of nonholonomic systems using the techniques developed in \cite{JKM}? Furthermore, with these methods is even  possible to approximate piecewise-smooth control, giving a more realistic behavior. See also \cite{Bl2003, BKM,SMY}.
\item Construction of new methods that mimic the  so-called ``sister'' piecewise holonomic system and study its relationship with the GNI method. 
The study of ``sister'' systems is interesting to modelize the dynamics of human walking, and in an averaged sense they approach to nonholonomic systems (see for more information \cite{HF,Ruina,SrRu,PrHo} and references therein). Observe that GNI is related to an elastic impact with the nonholonomic distribution (see \cite{SDD1}).
\end{itemize}

\section*{Appendix: Retraction maps}

As mentioned in subsection \ref{RS1} a {\it retraction map} $\tau:\al\Flder G$ is an analytic local diffeomorphism which maps a neighborhood of $0\in\al$ onto a neighborhood of the neutral element $e\in G$, such that $\tau(0)=e$ and $\tau(\xi)\tau(-\xi)=e$, for $\xi\in\al$. There are many choices for the map $\tau$ such as the Cayley map, the exponential map, etc.
The retraction map  is used
to express small discrete changes in the group configuration
through unique Lie algebra elements, say $\xi_k=\tau^{-1}(g_k^{-1}g_{k+1})/h$. That is, if $\xi_{k}$ were regarded as an average velocity between $g_{k}$ and $g_{k+1}$, then $\tau$ is an approximation to the integral flow of the dynamics. The difference $g_{k}^{-1}\,g_{k+1}\in G$, which is an element of a nonlinear space, can now be represented by the vector $\xi_{k}$. (See \cite{BouMarsden,MarinThesis} for further details.)

Of great importance is the {\it right trivialized tangent} of the retraction map. Complementary to \eqref{RTT} is the following definition:

\begin{definition}\label{Retr_again}
Given a retraction map $\tau\colon\mathfrak{g}\to G$, its right trivialized tangent $\mbox{d}\tau_{\xi}\colon\mathfrak{g}\to\mathfrak{g}$ is defined as the $\xi$-dependent linear map obtained by composition of the linear maps
\[
\xymatrix@C=4pc{
\al\ar[r]^-{\{\xi\}\times \operatorname{id}}\ar@/_{4ex}/[rrr]_{\mathrm{d}\tau_\xi}
&\{\xi\}\times\al\ar[r]^{T_\xi\tau}&T_{\tau(\xi)}G\ar[r]^-{T_{\tau(\xi)}r_{\tau(\xi)^{-1}}}&T_eG\equiv\al
}
\]
where $r$ denotes right translation in the group. Since $\tau$ is a local diffeomorphism, all the arrows are linear isomorphisms. We denote the inverse of $\mbox{d}\tau_{\xi}$ as $\mbox{d}\tau_{\xi}^{-1}$. Omitting the first identification for brevity, we can write
\begin{align}
\mbox{d}\tau_{\xi}&=T_{\tau(\xi)}r_{\tau(\xi)^{-1}}\circ T_\xi\tau\label{eq:dtaudef}\\
\mbox{d}\tau_{\xi}^{-1}&=(T_\xi\tau)^{-1}\circ T_{e}r_{\tau(\xi)}=
T_{\tau(\xi)}(\tau^{-1})\circ T_{e}r_{\tau(\xi)}\label{eq:dtauinversedef}
\end{align}
\end{definition}
\begin{remark}
Omitting the identifications $\al\equiv\{\xi\}\times\al$, $\xi\in\al$, can lead to mismatches when using the definitions above explicitly; for example, if we rewrite equation \eqref{eq:dtauinverseAd} below using \eqref{eq:dtauinversedef}, then the left-hand side would be in $\{\xi\}\times\al$ while the right-hand side would be in $\{-\xi\}\times\al$. This should cause no problems if the identifications are made explicit when needed. In any case, \eqref{eq:dtauinverseAd} makes sense as an identity in $\al$. 
\end{remark}

\begin{lemma}\label{Ant}
(See \cite{MarsdenRatiu})
Let $g\in G$, $\lambda\in\al$ and $\delta f$ denote the variation of a function $f$ with respect to its parameters. Assuming $\lambda$ is constant, the following identity holds
\[
\delta(\Ad_{g}\,\lambda)=-\Ad_{g}\,[\lambda\,,\,g^{-1}\delta g],
\]
where $[\cdot\,,\,\cdot]:\al\times\al\Flder\al$ denotes the Lie bracket operation or equivalently $[\xi\,,\,\eta]\equiv\ad_{\xi}\eta$, for given $\eta,\,\xi\in\al$.
\end{lemma}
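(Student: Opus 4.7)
The plan is to differentiate $s\mapsto \mathrm{Ad}_{g(s)}\lambda$ along a curve $g(s)\subset G$ with $g(0)=g$ and $\dot g(0)=\delta g$, and then use the group-homomorphism property of $\mathrm{Ad}$ to reduce the computation to a derivative at the identity, where $T_e\mathrm{Ad}=\mathrm{ad}$.

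Concretely, I would begin by factoring $\mathrm{Ad}_{g(s)}=\mathrm{Ad}_{g}\circ\mathrm{Ad}_{g^{-1}g(s)}$, which is valid because $\mathrm{Ad}\colon G\to\mathrm{Aut}(\mathfrak{g})$ is a group homomorphism. Since $\lambda$ is held constant and $\mathrm{Ad}_g$ is a fixed linear map, one obtains
\begin{equation*}
\delta(\mathrm{Ad}_{g}\lambda)=\frac{d}{ds}\bigg|_{s=0}\mathrm{Ad}_{g(s)}\lambda=\mathrm{Ad}_{g}\,\frac{d}{ds}\bigg|_{s=0}\mathrm{Ad}_{g^{-1}g(s)}\lambda.
\end{equation*}
The inner curve $s\mapsto g^{-1}g(s)$ passes through $e$ at $s=0$ with left-trivialized velocity $T_{g}\ell_{g^{-1}}(\delta g)$, which is precisely what the notation $g^{-1}\delta g\in\mathfrak{g}$ stands for.

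The second step is to use the standard identification $T_e\mathrm{Ad}=\mathrm{ad}$, namely that for any curve $h(s)$ through $e$ with $\dot h(0)=\eta\in\mathfrak{g}$, one has $\frac{d}{ds}\big|_{s=0}\mathrm{Ad}_{h(s)}\lambda=\mathrm{ad}_{\eta}\lambda=[\eta,\lambda]$. Applying this with $\eta=g^{-1}\delta g$ yields
\begin{equation*}
\frac{d}{ds}\bigg|_{s=0}\mathrm{Ad}_{g^{-1}g(s)}\lambda=[g^{-1}\delta g,\lambda].
\end{equation*}
Combining the two displays and using antisymmetry of the bracket gives
\begin{equation*}
\delta(\mathrm{Ad}_{g}\lambda)=\mathrm{Ad}_{g}[g^{-1}\delta g,\lambda]=-\mathrm{Ad}_{g}[\lambda,g^{-1}\delta g],
\end{equation*}
which is the desired identity.

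There is essentially no obstacle here; the only points requiring a bit of care are the convention for $g^{-1}\delta g$ (left- versus right-trivialization of $T_gG$) and the sign convention in $\mathrm{ad}_{\eta}\lambda=[\eta,\lambda]$. Once these are fixed consistently with the rest of the paper, the identity follows by a one-line computation and no further analytical work is needed.
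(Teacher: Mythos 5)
Your proof is correct: the factorization $\Ad_{g(s)}=\Ad_g\circ\Ad_{g^{-1}g(s)}$, the identification of the velocity of $s\mapsto g^{-1}g(s)$ at $s=0$ with $g^{-1}\delta g$, and the standard fact $\tfrac{d}{ds}\big|_{s=0}\Ad_{h(s)}\lambda=\ad_{\dot h(0)}\lambda$ combine exactly as you say, and the sign conventions ($[\xi,\eta]=\ad_\xi\eta$, left-trivialization of $T_gG$) are consistent with the rest of the paper. The paper itself gives no proof of this lemma --- it simply cites Marsden--Ratiu --- so there is nothing to contrast with; your argument is the standard one and fills that gap adequately.
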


\begin{lemma} For each $\lambda\in\al$, the derivative of the map $\psi_\lambda\colon\al \to \al$ defined by $\psi_\lambda(\xi)=\Ad_{\tau(\xi)}\lambda$ is given by
\[
D\psi_\lambda(\xi)\cdot \eta=-[\Ad_{\tau(\xi)}\lambda\,,
\,\mathrm{d}\tau_{\xi}(\eta)],
\]
$\eta\in\al$. 
\end{lemma}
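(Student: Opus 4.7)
The plan is to relate $D\psi_\lambda(\xi)\cdot\eta$ to the right trivialized tangent $\mathrm{d}\tau_{\xi}$ via Lemma \ref{Ant}, and then apply the standard $\Ad$-equivariance of the bracket to move the $\Ad_{\tau(\xi)}$ inside.

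First I would write
\[
D\psi_\lambda(\xi)\cdot \eta = \left.\frac{d}{dt}\right|_{t=0}\Ad_{\tau(\xi+t\eta)}\lambda = \delta(\Ad_{g}\lambda),
\]
with $g=\tau(\xi)$ and variation $\delta g = T_\xi\tau\cdot\eta$. Since $\lambda$ is constant, Lemma \ref{Ant} gives $\delta(\Ad_g\lambda) = -\Ad_g[\lambda,\, g^{-1}\delta g]$, where $g^{-1}\delta g$ stands for $T_{g}\ell_{g^{-1}}(\delta g)\in\al$.

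The next step is to rewrite $g^{-1}\delta g$ in terms of $\mathrm{d}\tau_\xi(\eta)$. By definition \ref{Retr_again}, $\mathrm{d}\tau_\xi(\eta)=T_{\tau(\xi)}r_{\tau(\xi)^{-1}}\circ T_\xi\tau\,(\eta)$, that is, the right trivialization of $\delta g$. The standard relation between left and right trivializations on a Lie group reads $T_g\ell_{g^{-1}} = \Ad_{g^{-1}}\circ T_g r_{g^{-1}}$, so
\[
\tau(\xi)^{-1}\delta\tau(\xi) = \Ad_{\tau(\xi)^{-1}}\mathrm{d}\tau_\xi(\eta).
\]
Substituting this into the expression from Lemma \ref{Ant}, I obtain
\[
D\psi_\lambda(\xi)\cdot\eta = -\Ad_{\tau(\xi)}\bigl[\lambda,\,\Ad_{\tau(\xi)^{-1}}\mathrm{d}\tau_\xi(\eta)\bigr].
\]

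Finally, applying the $\Ad$-invariance identity $\Ad_g[u,v]=[\Ad_g u,\Ad_g v]$, which can also be written $\Ad_g\circ\ad_u\circ\Ad_{g^{-1}}=\ad_{\Ad_g u}$, the right-hand side becomes $-[\Ad_{\tau(\xi)}\lambda,\,\mathrm{d}\tau_\xi(\eta)]$, which is precisely the desired formula. No step is truly an obstacle; the only subtle point is to be careful that $g^{-1}\delta g$ in the statement of Lemma \ref{Ant} denotes the left trivialization and not the right one, so the conversion through $\Ad_{\tau(\xi)^{-1}}$ is necessary before the $\Ad$-equivariance of the bracket can be applied to absorb the outer $\Ad_{\tau(\xi)}$.
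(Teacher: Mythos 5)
Your proposal is correct and follows essentially the same route as the paper: both apply Lemma \ref{Ant} with $g=\tau(\xi)$ and $\delta g = T_\xi\tau\cdot\eta$, and then use the $\Ad$-equivariance of the bracket to convert the left-trivialized $g^{-1}\delta g$ into the right-trivialized $\mathrm{d}\tau_\xi(\eta)$. The only cosmetic difference is that you insert the intermediate identity $g^{-1}\delta g=\Ad_{g^{-1}}\mathrm{d}\tau_\xi(\eta)$ explicitly before cancelling, whereas the paper pushes $\Ad_{\tau(\xi)}$ inside in one step.
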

\begin{proof}
By lemma \ref{Ant},
\begin{align*}
D\psi_\lambda(\xi)\cdot \eta
&=-\Ad_{\tau(\xi)}[\lambda\,,\,\tau(\xi)^{-1}  T_\xi\tau(\eta)]\\
&=-[\Ad_{\tau(\xi)}\lambda\,,\, T_\xi\tau(\eta)\tau(\xi)^{-1}]\\
&=-[\Ad_{\tau(\xi)}\lambda\,,\,\mathrm{d}\tau_{\xi}(\eta)],
\end{align*}
obtained from the trivialized tangent definition \ref{Retr_again} and using the fact that $\Ad_{g}[\lambda\,,\,\eta]=[\Ad_{g}\lambda\,,\,\Ad_{g}\eta].$
\end{proof}
The lemma above holds not only for retraction maps but also for any smooth map $\tau\colon\al \to G$, for which $\mathrm{d}\tau_{\xi}$ can be defined as in definition \ref{Retr_again}.

The following lemma relates the right trivialized tangents at $\xi$ and $-\xi$, as well as their inverses.
\begin{lemma}\label{lemaPr_again}
For a retraction map $\tau\colon\al \to G$ and any $\xi,\eta\in\al$, the following identities hold:
\begin{align}
\label{eq:dtauAd}
\mathrm{d}\tau_{\xi}\,\eta&=\Ad_{\tau(\xi)}\,\mathrm{d}\tau_{-\xi}\,\eta,\\
\label{eq:dtauinverseAd}
\mathrm{d}\tau^{-1}_{\xi}\,\eta&=\mathrm{d}\tau_{-\xi}^{-1}\lp\Ad_{\tau(-\xi)}\,\eta\rp.
\end{align}
\end{lemma}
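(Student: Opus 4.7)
The plan is to derive \eqref{eq:dtauAd} from the defining property of a retraction map, namely $\tau(\xi)\tau(-\xi)=e$ for $\xi\in\al$, and then obtain \eqref{eq:dtauinverseAd} as an algebraic consequence.

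First, I would set $g=\tau(\xi)$ and differentiate the identity $\tau(-\xi)=\tau(\xi)^{-1}$ (equivalently $i\circ\tau=\tau\circ(-\operatorname{id})$) at the point $\xi$ in the direction $\eta\in\al$. Using the well-known formula for the derivative of the inversion map $T_g i(v_g)=-g^{-1}v_g g^{-1}$ (which follows by differentiating $g\cdot g^{-1}=e$), I obtain the relation
\[
T_{-\xi}\tau(\eta)\;=\;T_e\ell_{g^{-1}}\circ T_g r_{g^{-1}}\bigl(T_\xi\tau(\eta)\bigr).
\]
This expresses $T_{-\xi}\tau(\eta)$ in terms of $T_\xi\tau(\eta)$ via a combination of left and right translations.

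Next, I would substitute this expression into the definition \eqref{eq:dtaudef} applied at $-\xi$, namely
\[
\mathrm{d}\tau_{-\xi}(\eta)=T_{\tau(-\xi)}r_{\tau(-\xi)^{-1}}\circ T_{-\xi}\tau(\eta)=T_{g^{-1}}r_g\circ T_{-\xi}\tau(\eta).
\]
The composition $T_{g^{-1}}r_g\circ T_e\ell_{g^{-1}}$ acting on $T_eG=\al$ is precisely the map $v\mapsto g^{-1}vg=\Ad_{g^{-1}}v=\Ad_{\tau(-\xi)}v$ (using that left and right translations commute). Recognizing the remaining factor $T_g r_{g^{-1}}\circ T_\xi\tau$ as $\mathrm{d}\tau_\xi$ by \eqref{eq:dtaudef} itself, I arrive at
\[
\mathrm{d}\tau_{-\xi}(\eta)=\Ad_{\tau(-\xi)}\,\mathrm{d}\tau_\xi(\eta),
\]
which is equivalent to \eqref{eq:dtauAd} after applying $\Ad_{\tau(\xi)}$ to both sides.

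For \eqref{eq:dtauinverseAd}, I would replace $\eta$ in \eqref{eq:dtauAd} by $\mathrm{d}\tau_\xi^{-1}(\eta)$ and apply $\Ad_{\tau(-\xi)}=\Ad_{\tau(\xi)^{-1}}$, then apply $\mathrm{d}\tau_{-\xi}^{-1}$, yielding the claimed identity directly. The main obstacle is bookkeeping: carefully tracking the base points of the tangent vectors (whether they sit in $T_gG$, $T_{g^{-1}}G$, or $T_eG\equiv\al$) and ensuring the correct identifications are made when invoking left versus right translations. Once this is handled, both identities fall out cleanly.
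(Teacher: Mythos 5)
Your proposal is correct and follows essentially the same route as the paper: the paper also differentiates the relation $\tau(-\xi)=\tau(\xi)^{-1}$ (writing it as two expressions for $T\rho(\xi)\cdot\eta$ with $\rho(\xi)=\tau(\xi)^{-1}$, which amounts to your use of the derivative of the inversion map), untangles the left and right translations to identify the $\Ad$ factor, and then obtains \eqref{eq:dtauinverseAd} by the same substitution $\eta\mapsto\mathrm{d}\tau_{\xi}^{-1}\,\eta$. The base-point bookkeeping you flag is handled exactly as you describe, so no gap remains.
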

\begin{proof} Define $\rho(\xi)=\tau(\xi)^{-1}$. Differentiating and using definition \ref{Retr_again}, we get
\[
T\rho(\xi)\cdot \eta
=-T\ell_{\tau(\xi)^{-1}}\lp Tr_{\tau(\xi)^{-1}}\lp T\tau(\xi)\cdot\eta\rp\rp=
-T\ell_{\tau(\xi)^{-1}}\lp \mathrm{d}\tau_\xi(\eta)\rp,
\]
where $T\ell\,,\,Tr$ are the tangent of the left and right translations in the group respectively. On the other hand, we also have $\rho(\xi)=\tau(-\xi)$, so the chain rule implies
\[
T\rho(\xi) \cdot \eta=T\tau(-\xi)\cdot (-\eta)
= Tr_{\tau(-\xi)}(\mathrm{d}\tau_{-\xi}(-\eta))
= -Tr_{\tau(\xi)^{-1}}(\mathrm{d}\tau_{-\xi}(\eta))
\]
Equating both expressions we obtain \eqref{eq:dtauAd}.

For the second identity, replace $\eta$ by $\mathrm{d}\tau_{\xi}^{-1}\,\eta$ in \eqref{eq:dtauAd} to obtain
\[
\eta=\Ad_{\tau(\xi)}\,\mbox{d}\tau_{-\xi}\,\mbox{d}\tau_{\xi}^{-1}\,\eta.
\]
Solving for $\mbox{d}\tau_{\xi}^{-1}\,\eta$, we obtain \eqref{eq:dtauinverseAd}.
\end{proof}

\subsection*{Some retraction map choices}
\begin{itemize}

\item[a)] The exponential map $\e:\al\Flder G$, defined by $\exp(\xi)=\gamma(1)$, where $\gamma:\R\Flder G$ is the integral curve through the identity of the vector field associated with $\xi\in\al$ (hence, with $\dot\gamma(0)=\xi$). The right trivialized derivative and its inverse are
\begin{align*}
\mbox{d}\e_{x}\,y&=\sum_{j=0}^{\infty}\frac{1}{(j+1)!}\,\ad_{x}^{j}\, y,\\
\mbox{d}\e_{x}^{-1}\,y&=\sum_{j=0}^{\infty}\frac{B_{j}}{j!}\,\ad_{x}^{j}\, y,
\end{align*}
where $B_{j}$ are the Bernoulli numbers (see \cite{Hairer}). Typically, these expressions are truncated in order to achieve a desired order of accuracy.
\vspace{0.2cm}

\item[b)] The Cayley map $\ca:\al\Flder G$ is defined by $\ca(\xi)=(e-\frac{\xi}{2})^{-1}(e+\frac{\xi}{2})$ and is valid for a general class of {\it quadratic groups}. The quadratic Lie groups are those defined as
\[G=\lc Y\in GL(n, \R)\,\mid\,Y^{T}PY=P\rc,\]
where $P\in GL(n, \R)$ is a given matrix (here, $GL(n, \R)$ denotes the general linear
group of degree $n$). $O(n)$ or $SO(n)$ are examples of quadratic Lie groups. The corresponding Lie algebra is
\[\mathfrak{g}=\lc\Omega\in {\mathfrak gl}(n,
\R)\,\mid\,P\Omega+\Omega^T P=0\rc.\]
The right trivialized derivative and inverse of the Cayley map are defined by
\begin{align*}
\mbox{d}\ca_{x}\,y&=(e-\frac{x}{2})^{-1}\,y\,(e+\frac{x}{2})^{-1},\\
\mbox{d}\ca_{x}^{-1}\,y&=(e-\frac{x}{2})\,y\,(e+\frac{x}{2}).
\end{align*}
\end{itemize}

\subsection*{Applications to matrix groups: $SO(3)$}
We specify the exact form of the Cayley transform for the group $SO(3)$. While we have given more than one general choice for $\tau$, for computational efficiency we recommend the Cayley map since it is simple. In addition, it is suitable for iterative integration and optimization problems since its derivatives do not have any singularities that might otherwise cause difficulties for gradient-based methods. The group of rigid body rotations is represented by $3\times 3$ matrices with orthonormal column vectors corresponding to the axes of a right-handed frame attached to the body. Recall the map $\hat\cdot:\R^{3}\Flder\alg$ presented in \eqref{hatso}. A Lie algebra basis for $SO(3)$ can be constructed as $\lc\hat e_{1},\hat e_{2},\hat e_{3}\rc$, $\hat e_{i}\in\alg$, where $\lc e_{1},e_{2},e_{3}\rc$ is the standard basis for $\R^{3}$. Elements $\xi\in\alg$ can be identified with the vector $\omega\in\R^{3}$ through $\xi=\omega^{\alpha}\,\hat e_{\alpha}$, or $\xi=\hat\omega$. Under such identification the Lie bracket coincides with the standard cross product, i.e., $\ad_{\hat\omega}\,\hat\rho=\omega\times\rho$, for $\omega,\rho\in\R^{3}$. Using this identification we have
\[
\ca(\hat\omega)=I_{3}+\frac{4}{4+\parallel\omega\parallel^{2}}\lp\hat\omega+\frac{\hat\omega^{2}}{2}\rp,
\]
where $I_{3}$ is the $3\times 3$ identity matrix. The linear maps $\mbox{d}\tau_{\xi}$ and $\mbox{d}\tau_{\xi}^{-1}$ are expressed as the $3\times 3$ matrices
\[
\mbox{d}\ca_{\omega}=\frac{2}{4+\parallel\omega\parallel^{2}}(2I_{3}+\hat\omega),\,\,\,\,\,\mbox{d}\ca_{\omega}^{-1}=I_{3}-\frac{\hat\omega}{2}+\frac{\omega\,\omega^{T}}{4}.
\]

\end{document}